\theoremstyle{plain}
\newtheorem{theorem}{Theorem}
\newtheorem{proposition}{Proposition}
\theoremstyle{remark}
\newtheorem{definition}{Definition}
\newtheorem{assumption}{Assumption}
\DeclareMathOperator*{\argmax}{arg\,max}
\newcommand{\bd}[1]{\boldsymbol{#1}}
\newcommand{\bX}{\bd{X}}
\newcommand{\bbeta}{\bd{\beta}}
\newcommand{\bmu}{\bd{\mu}}
\newcommand{\bSigma}{\bd{\Sigma}}
\newcommand{\Dc}{\mathcal{D}}
\newcommand{\suml}[1]{\sum\limits_{#1}}
\newcommand{\bSigmasi}{\bSigma_{s}}
\newcommand{\sia}{\sigma_{s0}}
\newcommand{\sib}{\sigma_{s1}}
\newcommand{\bgamma}{\bd{\gamma}}
\newcommand{\balpha}{\bd{\alpha}}
\newcommand{\btheta}{\bd{\theta}}
\newcommand{\bx}{\bd{x}}
\newcommand{\bphi}{\bd{\phi}}
\newcommand{\beia}{\bd{e}_{i0}}
\newcommand{\beib}{\bd{e}_{i1}}
\newcommand{\etay}{\eta_{y}}
\newcommand{\etasa}{\eta_{s0}}
\newcommand{\etasb}{\eta_{s1}}
\newcommand{\nuy}{\nu_{y}}
\newcommand{\nusa}{\nu_{s0}}
\newcommand{\nusb}{\nu_{s1}}
\newcommand{\sigmasa}{\sigma_{s0}}
\newcommand{\sigmasb}{\sigma_{s1}}
\newcommand{\sigmay}{\sigma_y}
\newcommand{\byaa}{\beta_{00}}
\newcommand{\byab}{\beta_{01}}
\newcommand{\byba}{\beta_{10}}
\newcommand{\bybb}{\beta_{11}}
\newcommand{\sigmasat}{{\sigma_{s0}^{\ast}}}
\newcommand{\sigmasbt}{{\sigma_{s1}^{\ast}}}
\newcommand{\byabt}{{\beta_{01}^{\ast}}}
\newcommand{\bybat}{{\beta_{10}^{\ast}}}
\newcommand{\rhot}{{{\rho}^{\ast}}}
\newcommand{\sigmayt}{{{\sigma}_y^{\ast}}}
\newcommand{\zetat}{{{\zeta}^{\ast}}}
\newcommand{\psit}{{{\psi}^{\ast}}}
\newcommand{\Rs}{\mathbb{R}}
\newcommand{\ind}{\perp\!\!\!\!\perp} 
\newcommand{\Var}{\text{Var}}
\newcommand{\Cov}{\text{Cov}}
\newcommand{\Cor}{\text{Cor}}
\begin{document}

\begin{frontmatter}
\title{Partial identification of principal causal effects under violations of principal ignorability}

\begin{aug}
\author[A]{\fnms{Minxuan }~\snm{Wu}\ead[label=e1]{wuminxuan@ufl.edu}}
\and\author[A]{\fnms{Joseph}~\snm{Antonelli}\ead[label=e2]{jantonelli@ufl.edu}}
\address[A]{Department of Statistics, University of Florida\printead[presep={,\ }]{e1,e2}}

\end{aug}

\begin{abstract}
Principal stratification is a general framework for studying causal mechanisms involving post-treatment variables. When estimating principal causal effects, the principal ignorability assumption is commonly invoked, which we study in detail in this manuscript. Our first key contribution is studying a commonly used strategy of using parametric models to jointly model the outcome and principal strata without requiring the principal ignorability assumption. We show that even if the joint distribution of principal strata is known, this strategy necessarily leads to only partial identification of causal effects, even under very simple and correctly specified outcome models. While principal ignorability leads to point identification in this setting, we discuss alternative, weaker assumptions and show how they can lead to informative partial identification regions. An additional contribution is that we provide theoretical support to strategies used in the literature for identifying association parameters that govern the joint distribution of principal strata. We prove that this is possible, but only if the principal ignorability assumption is violated. Additionally, due to partial identifiability of causal effects even when these association parameters are known, we show that these association parameters are only identifiable under strong parametric constraints. Lastly, we extend these results to more flexible semiparametric and nonparametric Bayesian models.
\end{abstract}

\begin{keyword}
\kwd{Principal stratification}
\kwd{Principal ignorability}
\kwd{Partial identification}
\kwd{Bayesian inference}
\end{keyword}

\end{frontmatter}


\section{\label{sec:introduction} Introduction}

In many causal inference problems, there exist post-treatment (also referred to as intermediate) variables, which can pose challenges for identification and inference. Principal stratification \citep{frangakis_rubin_2002} is a general framework for studying causal mechanisms involving such post-treatment variables. It formally defines causal estimands of interest in terms of principal strata, where principal strata represent the joint potential values of the intermediate variable. The principal strata are not affected by the treatment and can therefore be treated as pre-treatment covariates, which leads to well-defined causal estimands conditional on the principal strata. These are local causal effects in the sense that they are defined for sub-populations of units defined by principal strata. Many scientific questions can be addressed within this framework, such as mediation \citep{vanderweele2008simple, mealli2012refreshing,kim2019bayesian}, noncompliance \citep{imbens_rubin_1997, mealli2013using}, surrogate endpoints \citep{zigler2012bayesian}, or truncation by death \citep{zhang2003estimation, ding2011identifiability, yang2016using}. 

In settings with binary treatments, principal strata are represented by the joint values of potential intermediates under both treatment levels. At most we get to observe only one of these two potential intermediates, which means that we require additional structural assumptions in order to identify principal causal effects (PCEs). Monotonicity and exclusion restriction assumptions, which reduce the number of principal strata and place restrictions on causal effects in certain strata, are commonly used, particularly in settings with noncompliance in randomized trials \citep{angrist_imbens_rubin_1996, hirano_imbens_rubin_zhou_2000, rubin_2006, baccini_mattei_mealli_2017, bia2022assessing}. Principal ignorability is another commonly invoked assumption \citep{jo_stuart_2009, jiang2021identification, mattei2023assessing}, which enforces that the potential outcome under one treatment level only depends on the potential intermediate at that same treatment level. All of these assumptions are fundamentally not testable, and their plausibility should be judged within the context of the problem being studied. 

Given the reliance on key untestable assumptions, many researchers have worked to develop approaches accounting for potential violations of causal assumptions. One approach is to perform sensitivity analysis, which highlights how results would change under specific violations of core assumptions, such as the principal ignorability assumption \citep{ding_lu_2016, wang_zhang_mealli_bornkamp_2022, trang_quynh_nguyen_stuart_scharfstein_ogburn_2024}. There is also an extensive literature on partial identification for principal causal effects \citep{cheng2006bounds, imai2008sharp, lee2009training, mealli2013using, mealli2016identification, yang2016using} where causal effects can be bounded under weaker assumptions than those required for point identification. Most of these previous studies have focused on randomized studies with binary intermediate variables. Others have focused on specific situations such as those with truncation by death, or when a secondary outcome is observed, which can provide more informative bounds. We focus on more general situations throughout this manuscript, or when such additional information is not available. 

Another line of research aims to incorporate parametric assumptions to obtain identification when nonparametric identification is difficult or impossible to obtain. This is particularly the case with continuous intermediate variables, which we address in this manuscript, as standard assumptions such as monotonicity and exclusion restrictions are not sufficient for nonparametric identification. Recent works studying nonparametric identification in this more difficult scenario have made the principal ignorability assumption, along with the strong parametric assumption that the joint distribution of the two potential intermediates is governed by a known copula with a known correlation parameter \citep{jiang2021identification, lu_jiang_ding_2023, zhang_yang_2025}. Results are then estimated across a range of plausible correlation parameters to see if results are robust to this choice. Other works have attempted to estimate this unknown correlation parameter by jointly modeling the potential intermediates and observed outcomes simultaneously \citep{schwartz_li_mealli_2011, bartolucci_grilli_2011}. The idea behind this strategy is that one cannot directly estimate the association between the two potential intermediates because they are never jointly observed, but this information can be obtained from a correctly specified outcome regression model. Related research aims to jointly model the distribution of the potential intermediates and observed outcome using either parametric \citep{jin_rubin_2008} or semiparametric models \citep{kim2019bayesian,kim2024bayesian}, and assume the correlation between the two potential intermediates is known. Many of these model-based approaches do not make the principal ignorability assumption, and they allow the observed outcome to depend on the values of both potential intermediate values. 

In this article, we make a number of contributions to the literature on identification and partial identification of principal causal effects, particularly in relation to the principal ignorability assumption. We focus on the aforementioned model-based approaches that jointly model the potential intermediates and observed outcome as these approaches are commonly used, and as we will show, have inherent difficulties with respect to identification that must be managed carefully. Our first contribution is to discuss the role of principal ignorability in estimating the unknown correlation parameter between the two potential intermediate variables. Recent works have estimated this parameter by jointly modeling the potential intermediates and observed outcome, and we study this process theoretically under a Bayesian model. We show that the posterior distribution of this crucial, and typically unidentified, parameter is indeed consistent for the true correlation parameter, but only if the outcome model is known and principal ignorability is violated. We derive the asymptotic variance of this posterior distribution providing insights on when this parameter can be estimated efficiently. While this is a promising result, extending to the case where the outcome model is estimated is more challenging, as our subsequent results show that even when the correlation parameter is known, an outcome model that allows the outcome to depend on both potential intermediate variables is necessarily only partially identifiable, even under extremely restrictive parametric assumptions. We discuss alternative assumptions that can be made to make these partial identification regions more informative, or even obtain point identification. Lastly, we extend these results to more flexible semiparametric and nonparametric Bayesian models.

\section{Notation and review of existing work}
\label{sec:NotationExisting}
\subsection{Notation and estimands}

Throughout let $\boldsymbol{X}$ denote a $p$-dimensional vector of pre-treatment covariates, $T$ a binary treatment, $S$ a binary or continuous intermediate, and $Y$ an outcome of interest. Our observed data consists of $n$ independent copies of these random variables, and we refer to the observed data by $\mathcal{D}$. We adopt the potential outcome framework \citep{rubin1974estimating} and define $S(t)$ to be the potential intermediate that would be observed had treatment been set to $t$, and similarly let $Y(t)$ be the potential outcome. Throughout, we let $\bd{U} = (S(0), S(1))$ denote the joint values of the two potential intermediates, and let $S$ be the observed value of the intermediate. We assume the Stable Unit Treatment Value Assumption (SUTVA, \cite{rubin_1980}), which implies that 1) there is no 'interference,' meaning that the potential outcomes for unit $i$ do not change with the treatments assigned to other units, and 2) there are not different versions of each treatment level. This assumption ensures that $S = S(T)$ and $Y = Y(T)$. Principal strata are defined by the joint potential values of the intermediate variable $\bd{U}=(S(0),S(1))$. The causal estimands of interest are then defined in terms of average differences in potential outcomes within principal strata, which are given by
\begin{equation*}
    PCE(\bd{u})=E\{Y(1)-Y(0) \mid \bd{U}=\bd{u}\}.
\end{equation*}
One may also be interested in these effects conditional on covariates, which are given by
\begin{equation*}
    PCE(\bd{u},\bx)=E\{Y(1)-Y(0) \mid \bd{U}=\bd{u}, \bd{X} = \bd{x}\}.
\end{equation*}
Now that these effects are defined, we can review standard identification assumptions used to allow these to be identified from the observed data. 

\subsection{Review of identification assumptions in PS analysis}

\label{subsec: review of identification assumptions}
A crucial assumption for identifying causal effects is that the treatment is exchangeable given covariates, or that there are no unmeasured confounders, which is given in the following assumption. 
\begin{assumption}[Treatment Ignorability] \label{assump:Ignorability}
    $T \ind (S(0),S(1),Y(0),Y(1))\mid \bd{X}$.
\end{assumption}
Treatment ignorability is a standard assumption and it holds by design in a randomized experiment. Importantly, it ensures that there are no unmeasured confounders of the treatment-intermediate and treatment-outcome relationship. Additionally, we must make a positivity assumption ensuring that treatment can take either value for any unit in our population.
\begin{assumption}[Positivity] \label{assump:Positivity}
      $0 < P(T = 1 \mid \bd{X} = \bx) < 1$ for all $\bd{x}$.
\end{assumption}
These two assumptions, combined with the SUTVA assumption, allow for identification of average causal effects, but they are generally not sufficient for identification of principal causal effects. In what follows, we detail different identification assumptions for PCEs separated by the binary and continuous intermediate settings, where these assumptions differ. 

\subsection{Binary intermediates}
\label{subsec:BinaryAssumptionReview}
Common assumptions with binary intermediates are monotonicity, exclusion restrictions, and principal ignorability.
\begin{assumption}[Monotonicity] \label{assump:monotonicity}
    $S(1) \geq S(0)$.
\end{assumption}
The monotonicity assumption, sometimes referred to as the no defiers assumption, is plausible in many applications. For instance, in studies of noncompliance, treatment is not available to those who are randomly assigned to the control condition. This assumption rules out the principal strata ($S(0)=1,S(1)=0$), which leads to identification of the distribution of principal strata. This assumption is commonly used along with the following assumption:
\begin{assumption}[Exclusion Restriction] \label{assump:ER} There are two exclusion restrictions:
    \begin{enumerate}
        \item Exclusion restriction for units of type 00: $Y(0)=Y(1)$ if $\bd{U}=(0,0)^T$.
        \item Exclusion restriction for units of type 11: $Y(0)=Y(1)$ if $\bd{U}=(1,1)^T$.
    \end{enumerate}
\end{assumption}
This assumption rules out a direct effect of the treatment on the outcome that does not go through a change in the intermediate. Again, this is reasonable in randomized studies of noncompliance where randomization to a treatment should not affect the outcome, except through its impact on what treatment is actually taken. Under the exclusion restriction and monotonicity assumptions, the average causal effect for compliers is identified as the ratio of the average causal effects on $Y$ and $S$: 
\begin{equation*}
    E\{Y(1)-Y(0) \mid S(0)=0,S(1)=1\}=\frac{E\{Y(1)-Y(0)\}}{E\{S(1)-S(0)\}}.
\end{equation*}
While plausible in certain settings, such as those with randomized treatment assignment and noncompliance, these are frequently not likely to hold, as there are many settings where a direct effect of treatment is plausible. In such settings, one can instead use the following assumption.
\begin{assumption}[Principal Ignorability] \label{assump:PI}
    \begin{equation*}
        E\{Y(t)\mid S(t),S(1 - t),\bX\}=E\{Y(t) \mid S(t),\bX\},\qquad t=0,1.
    \end{equation*}
\end{assumption}
Note that researchers sometimes invoke a slightly stronger assumption of full conditional independence, though conditional mean independence is sufficient for all estimands considered here so we proceed in this manner. One implication of this assumption is that $\bX$ includes all confounders between the intermediate and outcome. Alternatively, the principal ignorability assumption can be viewed as a homogeneity assumption, as it implies that the potential outcome means are the same across certain principal strata. This assumption greatly facilitates identification because under principal ignorability and treatment ignorability, we have that the PCE for principal strata $\bd{u} = (s_0, s_1)$ is identified as
\begin{equation*}
    PCE(\bd{u},\bx)=E(Y\mid S=s_1,\bX=\bx,T=1)-E(Y\mid S=s_0,\bX=\bx,T=0),
\end{equation*}
which is a function of observable variables. The identification of the average PCE requires integration of this quantity over the covariate distribution, which requires knowledge of the conditional probabilities of principal strata, $P(\bd{U}\mid \bX)$, also known as the principal score. The identification of the principal score requires additional assumptions beyond principal ignorability.

\subsection{Continuous intermediates}

For a binary $S$, there are only four principal strata, and the monotonicity assumption eliminates one of these groups and enables the identification of the probability mass of $\bd{U} \mid \bX$ \citep{angrist_imbens_rubin_1996}. For a continuous $S$, there are infinitely many principal strata, and while the monotonicity assumption can rule out some of them, there still remain infinitely many principal strata. Thus, monotonicity alone cannot produce the identification of the distribution of $\bd{U} \mid \bX$, sometimes referred to as the principal density for continuous intermediates. Typically, researchers make parametric assumptions in order to identify this distribution. One approach is to assume the joint distribution of principal strata follows a known copula function $\mathbb{C}_{\rho}(\cdot,\cdot)$ (see \cite{lu_jiang_ding_2023}; \cite{jiang2021identification}; \cite{bartolucci_grilli_2011}):
\begin{equation*}
    pr(S(0) \leq s_0, S(1) \leq s_1\mid\bX)=\mathbb{C}_{\rho}(pr(S(0)\leq s_0\mid\bX),pr(S(1)\leq s_1\mid\bX)),
\end{equation*}
where the parameter $\rho$ indicates the correlation between $S(0)$ and $S(1)$ given $\bd{X}$. 

In general, the explicit identification of the correlation between potential intermediates is infeasible because only one of them is ever observed per observation. This has led to a common practice of $\rho$ being selected based on domain knowledge, and a sensitivity analysis is performed to assess the impact of this parameter. Throughout this article, we refer to parameters like $\rho$ that cannot be identified marginally as the association parameters of the principal strata. Under this distributional assumption, combined with Assumptions (\ref{assump:Ignorability}) and (\ref{assump:PI}), \cite{lu_jiang_ding_2023} demonstrated nonparametric identification of PCEs. Given that knowledge of this correlation parameter is a strong assumption, and that results can be sensitive to its choice, many authors have taken a parametric modeling approach to estimate this correlation. Additionally, in most of these works, principal ignorability is not assumed, showing that the parametric modeling assumptions reduce the number of structural assumptions that need to be made. This is the focus of our work, as we describe the extent to which inference in these settings is possible in the following sections. While we focus primarily on continuous intermediate variables, additional discussion extending our results to binary intermediates is included in Appendix C, and, when appropriate, we will highlight differences between these two cases.

\section{On the possibility of identifying \texorpdfstring{$\rho$}{rho} \label{sec: identifying rho}}

\subsection{A simple parametric model}
Most approaches that jointly model the observed outcome and potential intermediates are Bayesian, which will be the focus of our manuscript, however, the results have similar implications for frequentist approaches aiming to estimate $\rho$, such as those seen in \cite{bartolucci_grilli_2011}. Bayesian approaches are popular for this choice (\cite{jin_rubin_2008}, \cite{schwartz_li_mealli_2011}, \cite{kim2019bayesian}, \cite{antonelli_mealli_beck_mattei}) as they naturally handle missing data and can propagate uncertainty from missing data imputations into the final causal estimands. This general approach consists of two parts: one is the specification of a model for the observed outcome and the other is the specification of a model for the joint values of the two potential intermediates, i.e., the principal strata.

For the purpose of illustration, we first focus on a simple, linear parametric model for the outcome as well as the principal strata. There are two main reasons for considering this simple setting. For one, this is arguably the simplest modeling strategy that is making the strongest parametric assumptions, and therefore one would assume is the most likely scenario to obtain identification. However, we show that even under such strong parametric assumptions, point identification is not possible in this setting, which has implications for other approaches in the literature that utilize more complicated modeling strategies. Second, this scenario leads to simple, closed-form expressions for many partial identification regions of interest that help build intuition for the identification issues present in these models. Importantly, we extend these ideas to more complex semiparametric and nonparametric models in Section \ref{sec:extensions}. We focus on a model given by
\begin{equation*}
    \begin{split}
        E(Y_{i} \mid \bd{U}_i,\bX_i,T_i=t)&=\bbeta_{t}^T\bd{U}_i+\lambda_t+\bgamma^T\bX_i,\\
        E(\bd{U}_i \mid \bX_i)&= (\phi_0+\balpha^T\bX_i,\phi_1+\balpha^T\bX_i)^T,
    \end{split}
    \label{model:SimpleBayesianModel}
\end{equation*}
where $\bbeta_t=(\beta_{t0},\beta_{t1})$ for $t=0,1$. Our focus is on understanding when these model parameters are identified, and the extent to which they are partially identified when point identification is not possible.

As mentioned earlier, we primarily focus on settings with a continuous intermediate variable, though we present results for the binary intermediate variable setting in Appendix C. As we discuss in subsequent sections as well as in Appendix C, identification is more challenging for the continuous intermediate setting, and is therefore the focus of our manuscript. We assume Gaussian errors in the outcome model and a joint Gaussian distribution for the principal strata. We define $\mu_{yit}$ to be the conditional mean of the potential outcomes, $E[Y_{it}\mid\bd{U}_i,\bX_i,T_i=t]$, and $\bmu_{si}=(\mu_{si0},\mu_{si1})$ to be the conditional mean of principal strata. Thus, a full data generating model can be written as:
\begin{equation}
    \begin{split}
        Y_{i}\mid \bd{U}_i,\bX_i,T_i=t&\sim N(\mu_{yit},\sigma_y^2),\qquad t=0,1.\\
        \bd{U}_i \mid \bX_i &\sim N(\bmu_{si},\bSigma_{s}),
    \end{split}
    \label{model:ContinuousWithCovariates}
\end{equation}
where \begin{equation*}
    \bSigma_{s}=\begin{pmatrix}
\sia^2& \rho\sia\sib \\
 \rho\sia\sib  &  \sib^2
\end{pmatrix}.
\end{equation*}

Of particular interest is $\rho$, the association parameter of the principal strata, which plays an important role in estimating PCEs. To the best of our knowledge, there is no literature formally studying whether it is possible to identify this association parameter, which we do in the following section. 

Model (\ref{model:ContinuousWithCovariates}) becomes a Bayesian model when we incorporate prior distributions for all parameters. Note that identification is inherently more nuanced for Bayesian models as informative prior distributions can lead to weak identification of model parameters. For a general discussion on Bayesian inference in such scenarios, we point readers to \cite{gustafson_2010}. We do not consider that scenario here as we assume flat or uninformative priors for all parameters. When we discuss identification throughout, we are simply referring to scenarios where the likelihood is maximized at the true parameter values. Similarly, when referring to partial identification, we are focused on situations where the likelihood is maximized by a region of values that are equally supported by the observed data, which can not be distinguished even with an infinite sample size.

\subsection{Identification of the association parameter of the principal strata}
\label{subec:IdentifyRho}
For a continuous intermediate, estimating average PCEs always requires estimating the principal density, which involves the association parameters. Identifying the association parameters is generally infeasible due to the lack of joint observations of principal strata. Some researchers assume these are known \citep{jin_rubin_2008}, while others have tried to estimate them using similar models to those in (\ref{model:ContinuousWithCovariates}) (\cite{schwartz_li_mealli_2011}, \cite{bartolucci_grilli_2011}). The idea is that although $S(0)$ and $S(1)$ are never jointly observed, their association is implicitly included in the outcome model. The following result provides theoretical justification to this strategy:
\begin{theorem}
    Suppose that $\btheta=(\bbeta_0,\bbeta_1,\lambda_0,\lambda_1,\sigmay^2,\bgamma,\balpha,\sigmasa,\sigmasb,\phi_0,\phi_1)$ are known and principal ignorability fails: that is, at least one of $\beta_{01}$ or $\beta_{10}$ is nonzero. Under model (\ref{model:ContinuousWithCovariates}), the posterior mode of $\rho \mid \mathcal{D}$ is consistent.
    \label{thm:IdentifyRho}
\end{theorem}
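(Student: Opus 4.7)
The plan is to derive the observed-data likelihood as a function of $\rho$ (with all other parameters fixed at their true values), to show that the corresponding population log-likelihood is uniquely maximized at the true value $\rhot$, and then to invoke standard M-estimator / posterior-consistency arguments. Because the paper takes flat priors, the posterior mode of $\rho$ equals the one-parameter MLE, so consistency of the posterior mode follows from consistency of that MLE.

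First I would write the conditional distribution of $Y_i$ given the observed data $(S_i, T_i, \bX_i)$ by marginalizing out the unobserved potential intermediate. For a unit with $T_i=0$, $S_i = S_i(0)$ is observed while $S_i(1)$ is not; under the bivariate Gaussian principal density, $S_i(1)\mid S_i(0), \bX_i$ is Gaussian with mean $\mu_{si1} + \rho(\sia/\sigmasa)\cdot 0 + \rho(\sib/\sigmasa)(S_i - \mu_{si0})$ and variance $\sib^2(1-\rho^2)$. Substituting into the outcome model gives $Y_i \mid S_i, T_i=0, \bX_i \sim N(M_{0,i}(\rho), V_0(\rho))$, where the slope of $M_{0,i}(\rho)$ on $S_i$ equals $\beta_{00} + \beta_{01}\rho \sib/\sigmasa$ and the variance equals $\sigmay^2 + \beta_{01}^2 \sib^2(1-\rho^2)$. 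A symmetric expression with $\beta_{10}$, $\sigmasa$, $\sib$ permuted holds for the $T_i=1$ arm. The marginal density of $S_i \mid T_i, \bX_i$ does not involve $\rho$ and contributes only an additive constant to the log-likelihood in $\rho$.

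The key identification step is to show that the population log-likelihood $\ell(\rho) = E_{\rhot}[\log p_\rho(Y\mid S,T,\bX)]$ is uniquely maximized at $\rhot$. Since both $p_{\rhot}$ and $p_\rho$ are conditionally Gaussian, their KL divergence has a closed form in the two conditional means and variances. In the $T=0$ arm, the slope on $S$ is a linear function of $\rho$ with coefficient $\beta_{01}\sib/\sigmasa$; hence whenever $\beta_{01}\neq 0$, distinct values of $\rho$ induce distinct conditional means of $Y \mid S, T=0, \bX$ at $\bX$-values where $\Var(S\mid T=0,\bX)>0$, yielding strictly positive KL. An analogous argument applies to the $T=1$ arm when $\beta_{10}\neq 0$. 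Because principal ignorability fails, at least one of $\beta_{01},\beta_{10}$ is nonzero, so for every $\rho\neq\rhot$ at least one arm contributes strictly positive KL, making $\ell$ uniquely maximized at $\rhot$.

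With identification established on the compact interval $\rho\in[-1,1]$, and since $\log p_\rho$ is continuous in $\rho$ with $V_t(\rho)$ bounded below by $\sigmay^2>0$, a uniform law of large numbers together with standard argmax continuity delivers consistency of the MLE, and hence of the posterior mode under flat priors. The main obstacle is the identification step: one must carefully account for both treatment arms simultaneously and quantify the difference in conditional means in terms of $(\rho-\rhot)$ so that the KL is shown to be strictly positive under the stated hypothesis. A secondary subtlety worth noting is that, even though this argument pins down $\rhot$, the subsequent partial-identification results in the manuscript show this is considerably weaker than point identification of the full PCE, which will reappear once the outcome model is itself treated as unknown.
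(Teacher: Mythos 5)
Your proposal is correct and follows essentially the same route as the paper: fix all nuisance parameters at their true values, observe that the observed-data likelihood depends on $\rho$ only through the conditional law of $Y$ given $S$ in each arm (with slope $\beta_{tt}+\rho\beta_{t,1-t}\sigma_{s,1-t}/\sigma_{st}$ and variance $\sigma_y^2+(1-\rho^2)\beta_{t,1-t}^2\sigma_{s,1-t}^2$), show the population log-likelihood is uniquely maximized at $\rhot$ because that slope is injective in $\rho$ exactly when principal ignorability fails, and conclude via a uniform law of large numbers and the flat prior. The only (immaterial) difference is that you establish unique maximization by a KL/Jensen argument where the paper explicitly differentiates the expected log-likelihood and exhibits the sign change at $\rhot$.
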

Additionally, in Appendix C we show an analogous result holds in situations with a binary intermediate. Theorem \ref{thm:IdentifyRho} shows when principal ignorability is violated, then identification of $\rho$ is indeed possible, though this result should be interpreted with caution. For one, it required the outcome model to be correctly specified with \textit{known} parameters. We will see in the subsequent sections that the model parameters are only partially identified in many settings, and the implications of Theorem \ref{thm:IdentifyRho} would only be useful for estimating $\rho$ if we can consistently estimate these regression parameters. Another issue is that this result relies on principal ignorability being violated. One can show that the posterior of $\rho$ reduces to the prior distribution for $\rho$ when principal ignorability holds. This can also be seen in Theorem \ref{thm:ApproximateVariance}, where we show the asymptotic variance of the posterior distribution of $\rho$.
\begin{theorem}[Asymptotic approximation of posterior variance] \label{thm:ApproximateVariance} 
    Suppose that $\btheta$ are known and principal ignorability fails: that is, at least one of $\beta_{01}$ or $\beta_{10}$ is nonzero. Under model (\ref{model:ContinuousWithCovariates}), an asymptotic approximation of the posterior variance of $\rho \mid \mathcal{D}$ is
    \begin{equation*}
        \begin{split}
        Var(\rho\mid \Dc)&\approx n^{-1} \bigg[\Bar{T}\bybat^2\sigmasat^2 \bigg\{\frac{2\rhot^2\bybat^2\sigmasat^2}{(\zeta_1^*-{\psi_1^*}^2)^2}+\frac{1}{\zeta_1^*-{\psi_1^*}^2} \bigg\}+\\
        &\qquad (1-\Bar{T}) \byabt^2\sigmasbt^2 \bigg\{\frac{2\rhot^2\byabt^2\sigmasbt^2}{(\zeta_0^*-{\psi_0^*}^2)^2}+\frac{1}{\zeta_0^*-{\psi_0^*}^2}  \bigg\} \bigg]^{-1},
    \end{split}
    \label{eq: posterior variance}
    \end{equation*}
    where $\Bar{T}=\suml{i=1}^nT_i/n$, $\zeta^*_0-{\psi^{*}_0}^2=\sigmayt^2+(1-\rhot^2)\byabt^2\sigmasbt^2$, and $\zeta_1^*-{\psi^*_1}^2=\sigmayt^2+(1-\rhot^2)\bybat^2\sigmasat^2$.
\end{theorem}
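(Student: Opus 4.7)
The plan is to invoke a Bernstein--von Mises style argument. Since $\btheta$ is held fixed and the prior on $\rho$ is flat, the posterior for $\rho$ is proportional to the observed-data likelihood, which by Theorem~\ref{thm:IdentifyRho} concentrates at $\rhot$. Under standard regularity of the Gaussian model, a Laplace approximation yields $\Var(\rho \mid \Dc) \approx [n\, I(\rhot)]^{-1}$, where $I(\rhot)$ is the per-unit expected Fisher information for $\rho$ evaluated at the true parameters. The proof therefore reduces to computing this Fisher information from the observed-data likelihood and rearranging it into the stated form.

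First I would write down the observed-data likelihood. After integrating out the unobserved potential intermediate, $(Y_i, S_i) \mid \bX_i, T_i = t$ is bivariate Gaussian whose mean depends on $\btheta$ but not on $\rho$. Its covariance matrix $\bSigma_t$ has diagonal entries $(\zeta_t, \sigma_{st}^2)$ and off-diagonal entry $\sigma_{st}\psi_t$, where
\begin{equation*}
\psi_1 = \rho\,\byba\sia + \bybb\sib, \qquad \psi_0 = \byaa\sia + \rho\,\byab\sib,
\end{equation*}
and $\zeta_t = \Var(Y_i \mid \bX_i, T_i = t)$. Direct algebra gives $\zeta_t - \psi_t^2 = \sigmay^2 + (1-\rho^2)\beta_{t,1-t}^2\sigma_{s,1-t}^2$, matching the notation in the statement. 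Because $\btheta$ is known, $\rho$ enters the likelihood only through $\bSigma_t$, so the standard Gaussian formula
\begin{equation*}
I_t(\rho) = \tfrac{1}{2}\,\mathrm{tr}\!\left(\bSigma_t^{-1}\tfrac{\partial \bSigma_t}{\partial \rho}\bSigma_t^{-1}\tfrac{\partial \bSigma_t}{\partial \rho}\right)
\end{equation*}
gives the per-unit information within treatment group $t$.

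Next I would compute $I_t(\rho)$ in closed form. Because $\partial_\rho \bSigma_t$ has a zero lower-right entry, applying the $2\times 2$ identity $\mathrm{tr}(M^2) = \mathrm{tr}(M)^2 - 2\det(M)$ to $M = \bSigma_t^{-1}\partial_\rho \bSigma_t$ avoids a brute-force expansion: the determinant piece factors through $\det(\partial_\rho \bSigma_t) = -\beta_{t,1-t}^2\sia^2\sib^2$ and yields the $1/(\zeta_t - \psi_t^2)$ summand, while the trace piece simplifies via the identities $\bybb\sib - \psi_1 = -\rho\byba\sia$ and $\byaa\sia - \psi_0 = -\rho\byab\sib$ to yield the $\rho^2$ summand. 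The outcome is
\begin{equation*}
I_t(\rho) = \beta_{t,1-t}^2\,\sigma_{s,1-t}^2\left[\frac{2\rho^2\beta_{t,1-t}^2\sigma_{s,1-t}^2}{(\zeta_t-\psi_t^2)^2} + \frac{1}{\zeta_t-\psi_t^2}\right].
\end{equation*}

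Finally, summing across units gives the total information $n[\bar T\, I_1(\rho) + (1-\bar T)\, I_0(\rho)]$; evaluating at $\rhot$ and the remaining starred parameters and inverting recovers exactly the bracketed expression in the theorem. The main obstacle is the $2\times 2$ matrix bookkeeping needed to distill $I_t(\rho)$ into the displayed form; the Bernstein--von Mises step is itself routine in this smooth, one-dimensional Gaussian problem with $\rhot \in (-1,1)$ in the interior of the parameter space, and the hypothesis that at least one of $\byabt, \bybat$ is nonzero ensures $I(\rhot) > 0$ so that the inversion is well defined and the approximation is non-degenerate.
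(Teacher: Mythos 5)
Your proposal is correct and follows essentially the same route as the paper: a Laplace/Bernstein--von Mises approximation reducing the posterior variance to the inverse of $n$ times the per-unit Fisher information of the observed-data bivariate Gaussian in (\ref{eq:MarginalDist}), followed by a closed-form evaluation of that information within each treatment arm. The only difference is cosmetic --- you compute $I_t(\rho)$ via the covariance-parameterized trace formula $\tfrac12\mathrm{tr}\{(\bSigma_t^{-1}\partial_\rho\bSigma_t)^2\}$, whereas the paper passes from $-E(l_i'')$ to $E\{(\partial_\rho l_i)^2\}$ via the second Bartlett identity; both yield the identical expression, and your simplifications (e.g.\ $\bybb\sib-\psi_1=-\rho\byba\sia$) check out.
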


Note that asterisks represent the true parameter values. We can see that $\byabt$ and $\bybat$ appear in the denominator of the approximation for the posterior variance. These two parameters together indicate the level of violations of principal ignorability, and we see that when both these parameters are close to zero, the posterior variance will be very large. Theorems \ref{thm:IdentifyRho} and \ref{thm:ApproximateVariance} together show that either principal ignorability holding, or even very slight violations of principal ignorability, will not be sufficient for accurately estimating the unknown correlation $\rho$. Overall, these results show that identification of $\rho$ is possible, but only if the outcome model is correctly specified and its parameters are able to be consistently estimated, and if principal ignorability fails. In the following section, we study this first issue of identifying the parameters of the outcome model in more detail when principal ignorability is violated.

\section{Partial identification when principal ignorability fails\label{sec: partial identification when PI fails}}

\subsection{Partial identification and principal causal effects}
\label{subsec: partial identification and principal causal effects}

Partial identification commonly arises in causal inference problems due to the inherent missingness of potential outcomes. Identification can commonly be achieved under stronger assumptions, but these are not always plausible and incorrectly assuming them can lead to bias when estimating causal effects. In Section \ref{sec:NotationExisting}, we reviewed common assumptions for binary and continuous intermediate variables that can lead to identification of principal causal effects. Within the context of continuous intermediates, nonparametric identification has been established under both a principal ignorability assumption, as well as a distributional assumption on the potential intermediate that assumes $\rho$ is known \citep{lu_jiang_ding_2023, zhang_yang_2025}. Other papers have estimated principal causal effects in this setting without assuming principal ignorability \citep{kim2019bayesian,kim2024bayesian}, and at times also not assuming knowledge of $\rho$ \citep{schwartz_li_mealli_2011, bartolucci_grilli_2011}. In this section, we bridge this gap by providing information about what is possible when principal ignorability is not assumed, and whether then $\rho$ can be subsequently estimated as suggested by the existing literature and the results in Section \ref{sec: identifying rho}. We do so in the context of the simple, linear model presented in (\ref{model:ContinuousWithCovariates}), but as we see in Section \ref{sec:extensions}, these results have implications for a much broader class of models. 

We now show that the parameters in Model (\ref{model:ContinuousWithCovariates}), which does not assume principal ignorability, are generally not identifiable even in this simple, parametric setting when $\rho$ is also treated as known. Of particular interest are the parameters dictating violations of principal ignorability, given by $(\byab,\byba)$. To see this, we can look at the distribution of the observed data, conditional on $T_i=t$ and $\bX_i=\bx$, which is given by
\begin{equation}
    (Y_i,S_i)\mid T_i=t,\bX_i=\bx\sim N \Bigg((\mu_{yi},\mu_{si}),\begin{pmatrix}
    \zeta_t& \psi_t\sigma_{st} \\
    \psi_t\sigma_{st} &  \sigma^2_{st},
    \end{pmatrix} \Bigg),
    \label{eq:MarginalDist}
\end{equation}
where $\mu_{yi}=\lambda_t+\bbeta_t^T(\phi_0+\balpha^T\bx,\phi_1+\balpha^T\bx) +\bgamma^T\bx$, $\mu_{si}=\phi_t+\balpha^T\bx$, $\zeta_{t}=\sigma_y^2+\bbeta_t^T\bSigma_{s}\bbeta_t$, and $\psi_t=\sigma_{st}\beta_{tt}+\rho\sigma_{s,1-t}\beta_{t,1-t}$ for $t = 0,1$. Clearly the parameters of the observed data distribution are identifiable from the observed data, and given $\bX_i=\bx$, there are 5 identifiable parameters for each of $t=0,1$ leading to 10 identifiable parameters. Excluding the parameters corresponding to $\bX$ given by $(\bgamma,\balpha)$, Model (\ref{model:ContinuousWithCovariates}) has 11 parameters, showing that there are fewer known equations than unknown variables. This leads to partial identification of certain parameters, despite being the simplest, parametric model possible that allows for violations of principal ignorability. Naturally, this also implies that the parameters of more complex semi- or nonparametric models, such as those seen in the literature \citep{bartolucci_grilli_2011, schwartz_li_mealli_2011}, are at best partially identified in this setting unless additional constraints are imposed on the model parameters.

In the presence of such partial identification, we are left with two options. One can proceed with inference where the parameters are only partially identified, and we detail the size of such partial identification regions in Section \ref{subsec:PartialIDregions}. Alternatively, additional assumptions can be made, which allows for point identification of the model parameters, and potentially $\rho$ as well, which we detail in Section \ref{sec: strong identification assumption}. We also provide weaker, alternative assumptions in Section \ref{sec: alternative assumptions} that do not obtain point identification, but can drastically reduce the widths of partial identification regions leading to more informative inference. 

\subsection{Partial identification regions when principal ignorability fails}
\label{subsec:PartialIDregions}

As mentioned previously, we focus on partial identification regions for parameters of interest without considering prior distributions for these parameters. Partial identification regions are equally supported by the observed data in the sense that they yield the same likelihood, but the posterior is not necessarily flat in these regions, due to the influence of the prior distribution. Clearly, the prior distribution affects the posterior distribution, and if strong prior knowledge is available, then this should be incorporated. We assume that such knowledge is not available and that flat priors are placed on unidentified parameters. Using $(\byab,\byba)$ and the identifiable parameters in (\ref{eq:MarginalDist}), the PCEs of interest can be written as 
\begin{equation*}
    \begin{split}
        PCE(\bd{u})&=(\byba-\frac{\psi_0}{\sia}+\frac{\sib}{\sia}\rho\byab)(s_0-\phi_0)+\\
        &\qquad (\frac{\psi_1}{\sib}-\frac{\sia}{\sib}\rho\byba-\byab)(s_1-\phi_1)+(\mu'_{y1}-\mu'_{y0}),
    \end{split}
    \label{eq:PCE in terms of beta}
\end{equation*}
where $\mu'_{yt}=\lambda_t+\bbeta_t^T(\phi_0,\phi_1)$ for $t = 0, 1$, and $(\byab,\byba)$ are subject to the constraints imposed by (\ref{model:ContinuousWithCovariates}) and (\ref{eq:MarginalDist}). Note that $\mu'_{yt}$ is identifiable and corresponds to the mean of potential outcomes, excluding the effects of covariates. Details about their identification are provided in Appendix A. The PCEs depend on the principal strata $\bd{u} = (s_0, s_1)$ and on $(\byab,\byba)$, and therefore we aim to study the partial identification region of $(\byab,\byba)$. For simplicity of the expressions for the partial identification regions of $(\byab,\byba)$, we exclude covariates from the theoretical results regarding partial identification regions. Note that in the Appendix we show that to incorporate covariates, we only need to condition additionally on $\bX$ for all conditional variances or correlations found in the partial identification regions.

\begin{proposition}
    \label{prop: PIR for byab and byba without constraints}Suppose that $\Var\{Y(1)\mid S(1)\}\geq\Var\{Y(0)\mid S(0)\}$.
    Then the partial identification region of $\byab$ is \begin{equation*}
        \left[-\sqrt{\frac{\Var\{Y(0)\mid S(0)\}}{(1-\rho^2)\Var\{S(1)\}}},\sqrt{\frac{\Var\{Y(0)\mid S(0)\}}{(1-\rho^2)\Var\{S(1)\}}} \right],
        \label{eq: PIR for byab without constraints}
    \end{equation*} 
    and the partial identification region of $\byba$ is 
    \begin{equation*}
        \begin{split}
            & \left[-\sqrt{\frac{\Var\{Y(1)\mid S(1)\}}{(1-\rho^2)\Var\{S(0)\}}},-\sqrt{\frac{\Var\{Y(1)\mid S(1)\}-\Var\{Y(0)\mid S(0)\}}{(1-\rho^2)\Var\{S(0)\}}} \right] \cup\\&\qquad \left[\sqrt{\frac{\Var\{Y(1)\mid S(1)\}-\Var\{Y(0)\mid S(0)\}}{(1-\rho^2)\Var\{S(0)\}}},\sqrt{\frac{\Var\{Y(1)\mid S(1)\}}{(1-\rho^2)\Var\{S(0)\}}} \right],
            \label{eq: PIR for byba without constraints}
        \end{split}
    \end{equation*}
    where $\byab$ and $\byba$ also satisfy the constraint
    \begin{equation}
        \Var\{S(0)\}\byba^2-\Var\{S(1)\}\byab^2=\frac{\Var\{Y(1)\mid S(1)\}-\Var\{Y(0)\mid S(0)\}}{1-\rho^2}.
        \label{eq: beta constraint}
    \end{equation}
\end{proposition}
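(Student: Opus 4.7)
The strategy is to pass through the observed-data distribution in equation~(\ref{eq:MarginalDist}), which already encodes every constraint that the data place on the model parameters. Since $(\zeta_0,\zeta_1,\psi_0,\psi_1,\sigmasa^2,\sigmasb^2)$ are identifiable from $(Y,S)\mid T=t$, and since the partial identification region for $(\byab,\byba)$ is exactly the set of values compatible with the identified quantities (through some valid choice of the remaining parameters, in particular a nonnegative $\sigmay^2$), it suffices to solve for $(\byab,\byba)$ in terms of identified quantities and impose $\sigmay^2\ge 0$.

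First, I would compute the conditional variance $\Var(Y(t)\mid S(t))$ using the bivariate normal formula, giving $\Var(Y(t)\mid S(t))=\zeta_t-\psi_t^2$. Substituting the definitions of $\zeta_t$ and $\psi_t$, the cross term $2\rho\sigmasa\sigmasb\beta_{tt}\beta_{t,1-t}$ in $\zeta_t$ cancels against the same term in $\psi_t^2$, leaving the clean identities
\begin{equation*}
\Var\{Y(0)\mid S(0)\}=\sigmay^2+(1-\rho^2)\sigmasb^2\byab^2, \qquad \Var\{Y(1)\mid S(1)\}=\sigmay^2+(1-\rho^2)\sigmasa^2\byba^2.
\end{equation*}
Here the left-hand sides equal $\Var(Y\mid S,T=t)$ under treatment ignorability, hence are identified, and $\sigmasa^2=\Var\{S(0)\}$, $\sigmasb^2=\Var\{S(1)\}$.

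Second, from $\sigmay^2\ge 0$ applied to each equation I would read off the marginal upper bounds
\begin{equation*}
|\byab|\le\sqrt{\tfrac{\Var\{Y(0)\mid S(0)\}}{(1-\rho^2)\Var\{S(1)\}}}, \qquad |\byba|\le\sqrt{\tfrac{\Var\{Y(1)\mid S(1)\}}{(1-\rho^2)\Var\{S(0)\}}}.
\end{equation*}
Then, since $\sigmay^2$ is a single shared parameter, subtracting the two displayed identities eliminates it and yields the equality constraint~(\ref{eq: beta constraint}). Under the assumption $\Var\{Y(1)\mid S(1)\}\ge\Var\{Y(0)\mid S(0)\}$, the right-hand side of~(\ref{eq: beta constraint}) is nonnegative, so $\byba^2\ge\frac{\Var\{Y(1)\mid S(1)\}-\Var\{Y(0)\mid S(0)\}}{(1-\rho^2)\Var\{S(0)\}}$, producing the two-interval shape of the $\byba$ region, while the $\byab$ region remains a single interval around zero.

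Finally, I would verify that both regions are sharp, not just necessary: given any $(\byab,\byba)$ satisfying the stated bounds and~(\ref{eq: beta constraint}), I can set $\sigmay^2:=\Var\{Y(0)\mid S(0)\}-(1-\rho^2)\sigmasb^2\byab^2\ge 0$ and recover $\byaa,\bybb$ by solving $\psi_0=\sigmasa\byaa+\rho\sigmasb\byab$ and $\psi_1=\sigmasb\bybb+\rho\sigmasa\byba$; together with the identified $(\bgamma,\balpha,\phi_0,\phi_1,\lambda_0,\lambda_1)$ (whose recovery from the observed means is discussed in Appendix A), this produces a valid parameterization reproducing the observed-data law, showing that no point in the claimed region can be excluded. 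The main obstacle is really this sharpness step: one must confirm that the six parameters $(\zeta_0,\zeta_1,\psi_0,\psi_1,\sigmasa^2,\sigmasb^2)$ together with~(\ref{eq: beta constraint}) and the nonnegativity of $\sigmay^2$ exhaust all restrictions the data impose, so that there are no hidden constraints further shrinking the region.
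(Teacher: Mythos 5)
Your proposal is correct and follows essentially the same route as the paper's proof: both express the structural parameters in terms of the identified marginal quantities $(\zeta_t,\psi_t,\sigma_{st}^2)$ via $\Var\{Y(t)\mid S(t)\}=\zeta_t-\psi_t^2=\sigma_y^2+(1-\rho^2)\sigma_{s,1-t}^2\beta_{t,1-t}^2$, treat $\sigma_y^2$ as the single remaining free parameter, and obtain the bounds from $\sigma_y^2\ge 0$ together with $\beta_{t,1-t}^2\ge 0$. Your explicit sharpness step (reconstructing $\beta_{00},\beta_{11},\lambda_t$ from $\psi_0,\psi_1$ and the means) is only implicit in the paper's presentation of the general solution, but it is the same calculation.
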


Note that if $\Var\{Y(0)\mid S(0)\}\geq\Var\{Y(1)\mid S(1)\}$, one can simply exchange 0 and 1 in Proposition \ref{prop: PIR for byab and byba without constraints} to obtain the corresponding partial identification region. Also note that all terms in the partial identification region are identifiable since we are assuming for now that $\rho$ is known. An interesting consequence of equation (\ref{eq: beta constraint}) is that if the right-hand side is non-zero, then at least one of $\byab$ and $\byba$ must be nonzero, indicating principal ignorability is necessarily violated. Proposition \ref{prop: PIR for byab and byba without constraints} shows that both the sign and magnitude of ($\byab,\byba$) are not identified, and that these regions can be rather large, which will lead to large partial identification regions for the PCEs of interest. Further, we can see an interplay between the size of the partial identification region for the two parameters. When the partial identification region for $\beta_{01}$ is larger due to a large value of $\Var(Y(0) \mid  S(0))$, this leads to a reduction in the width of the region for $\beta_{10}$, and vice-versa. This interplay also effectively guarantees that the partial identification region for at least one of the parameters (if not both) is large, and therefore principal causal effects will have wide partial identification regions as well. Interestingly, in Appendix C we show that for binary intermediates, the magnitudes of both $\beta_{01}$ and $\beta_{10}$ are in fact identifiable, though their signs are not. This shows that the continuous intermediate setting is inherently more challenging than the binary intermediate setting. 

\subsection{Assumptions for identification when principal ignorability is violated}
\label{sec: strong identification assumption}

The previous sections cast doubt on the ability to perform inference when principal ignorability is violated, and additionally question whether the results from Section \ref{subec:IdentifyRho} are useful in practice, because they showed that $\rho$ can only be identified under known outcome model parameters. Here, we discuss different assumptions that can not only lead to the identification of PCEs given known $\rho$, but also to the identification of $\rho$ itself. As discussed earlier, there are 10 identified parameters in (\ref{eq:MarginalDist}), while Model (\ref{model:ContinuousWithCovariates}) has 11 unknown parameters if $\rho$ is known, or 12 unknown parameters otherwise. The easiest way to make these parameters point identifiable is to add constraints on the parameters of Model (\ref{model:ContinuousWithCovariates}) that reduce the number of unknown parameters. For instance, principal ignorability can be viewed as imposing the constraint $\byab = \byba = 0$, which leads to the identification of all parameters in Model (\ref{model:ContinuousWithCovariates}). Despite there being two constraints imposed, this particular assumption does not identify $\rho$, because Theorem \ref{thm:IdentifyRho} showed that principal ignorability must be violated to obtain identification of $\rho$. 

Alternative assumptions beyond the principal ignorability assumption have been proposed in the literature, which we can adopt to obtain point identification. One such modeling assumption explored in recent work \citep{schwartz_li_mealli_2011, bartolucci_grilli_2011} in our setting amounts to setting $\byaa=\byba,\byab=0$. This assumption treats $E\{Y(0)\mid \bd{U}, \bX\}$ as a baseline characteristic that is shared between the treatment and control groups \citep{schwartz_li_mealli_2011}. This assumption identifies $\rho$ only up to its sign, however, it is typically reasonable to assume $\rho > 0$, which enables full identification of $\rho$. Alternatively, one could make the assumption that both $\lambda_0 = \lambda_1$ and $\byab = 0$, which amounts to assuming there is no direct effect of treatment on the outcome and that the unobserved intermediate does not affect the outcome in the control group. Note that both of these possibilities impose two constraints on the parameter space, which allows us to identify both the outcome model parameters and the unknown correlation $\rho$. If $\rho$ is treated as known, or as a sensitivity parameter to be varied, then only one constraint is needed to obtain identification of the outcome model parameters. It is important to emphasize that while these constraints can be placed to obtain identification, they should only be made if they are deemed plausible in the application to which they are utilized. If one can not make such strong assumptions, then one should continue with partial identification. To this end, we now shift focus to incorporating more plausible assumptions that do not obtain point identification, but can reduce the widths of partial identification regions significantly when such strong identifying assumptions are not plausible.

\section{Weaker alternative assumptions of principal ignorability}
\label{sec: alternative assumptions}

Strong assumptions like principal ignorability or the assumptions discussed in Section \ref{sec: strong identification assumption} can lead to point identification of principal causal effects, but they are untestable and may be overly restrictive in many settings. Without them, however, partial identification regions such as those seen in Proposition \ref{prop: PIR for byab and byba without constraints} can be overly wide and uninformative, even in contexts with simple, parametric models. This is expected to persist or be exacerbated in the context of more flexible outcome models. To allow for violations of principal ignorability and reduce the widths of partial identification regions, we propose two weaker assumptions that can be reasoned about in any particular application. The first assumption addresses the issue of unidentified signs in the effects of the unobserved intermediate. 
 
\begin{assumption}[\label{assump: same sign}Same Sign]
    Assume 
    \begin{equation*}
        \text{sign} \Bigg\{\dfrac{\partial E(Y\mid T,\bd{U},\bX)}{\partial S(T)}\Bigg\} =\text{sign} \Bigg\{\dfrac{\partial E(Y\mid T,\bd{U},\bX)}{\partial S(1-T)}\Bigg\}. 
    \end{equation*} 
\end{assumption}

Assumption \ref{assump: same sign} states that, given the covariates and treatment, the observed and unobserved intermediates should have effects in the same direction on the outcome. This is plausible in many applications where the effect of the intermediate on the outcome can only plausibly be in one direction. In the case of Model (\ref{model:ContinuousWithCovariates}), this simplifies to assuming that $\beta_{01}$ and $\beta_{00}$ have the same sign, and that $\beta_{10}$ and $\beta_{11}$ have the same sign. Clearly this assumption rules out a significant portion of the partial identification region found in Proposition \ref{prop: PIR for byab and byba without constraints}, which is shown in the following result.
Under Assumption \ref{assump: same sign}, only the magnitudes of ($\byab,\byba$) are unidentified.

\begin{proposition}\label{prop: PIR for byab and byba assuming same signs}
    Suppose that Assumption \ref{assump: same sign} holds and $\Var\{Y(1)\mid S(1)\}\geq\Var\{Y(0)\mid S(0)\}$. Assume $\byaa\geq0$ and $\bybb\geq0$. The partial identification region for $\byab$ is 
    \begin{equation*}
        \Bigg[0,\sqrt{\frac{\Var\{Y(0)\mid S(0)\}}{(1-\rho^2)\Var\{S(1)\}}} \Bigg],
    \end{equation*}
    and the partial identification region for $\byba$ is
    \begin{equation*}
        \Bigg[\sqrt{\frac{\Var\{Y(1)\mid S(1)\}-\Var\{Y(0)\mid S(0)\}}{(1-\rho^2)\Var\{S(0)\}}},\sqrt{\frac{\Var\{Y(1)\mid S(1)\}}{(1-\rho^2)\Var\{S(0)\}}} \Bigg],
    \end{equation*}
    where $\byab$ and $\byba$ also satisfy (\ref{eq: beta constraint}).
\end{proposition}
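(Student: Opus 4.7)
My plan is to reduce the proof to an application of Proposition \ref{prop: PIR for byab and byba without constraints} together with Assumption \ref{assump: same sign}, and then simply trim the already-derived partial identification regions to the non-negative orthant.

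First, I would translate Assumption \ref{assump: same sign} into parameter-level statements under Model (\ref{model:ContinuousWithCovariates}). Since $E(Y\mid T=t,\bd{U},\bX)$ is linear with $\partial/\partial S(t)=\beta_{tt}$ and $\partial/\partial S(1-t)=\beta_{t,1-t}$, Assumption \ref{assump: same sign} is equivalent to $\text{sign}(\beta_{00})=\text{sign}(\beta_{01})$ and $\text{sign}(\beta_{10})=\text{sign}(\beta_{11})$. Combined with the hypotheses $\byaa\geq 0$ and $\bybb\geq 0$, this forces $\byab\geq 0$ and $\byba\geq 0$.

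Next, I would invoke Proposition \ref{prop: PIR for byab and byba without constraints}, noting that its derivation used only the mapping from model parameters to the identified quantities in (\ref{eq:MarginalDist}). Imposing $\byab,\byba\geq 0$ simply intersects the displayed regions with $[0,\infty)$: the symmetric interval for $\byab$ collapses to its upper half, and the disjoint two-interval region for $\byba$ collapses to its positive branch. The constraint (\ref{eq: beta constraint}) is preserved, since it was derived from $\zeta_0,\zeta_1$ and depends only on $\byab^2$ and $\byba^2$, hence is untouched by the sign restrictions.

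The main obstacle is verifying that the non-negativity constraints on $\byaa$ and $\bybb$ do not further shrink the stated regions. Using $\psi_t=\sigma_{st}\beta_{tt}+\rho\sigma_{s,1-t}\beta_{t,1-t}$ from (\ref{eq:MarginalDist}), we can solve $\beta_{00}=(\psi_0-\rho\sigma_{s1}\byab)/\sigma_{s0}$ and $\beta_{11}=(\psi_1-\rho\sigma_{s0}\byba)/\sigma_{s1}$; the hypotheses $\beta_{00},\beta_{11}\geq 0$ then translate into inequalities linking $\byab$ and $\byba$ to the identified quantities $\psi_0,\psi_1$. The plan is to argue that these constraints are automatically consistent with the endpoints in Proposition \ref{prop: PIR for byab and byba assuming same signs} whenever the observed-data covariance matrix in (\ref{eq:MarginalDist}) is positive semidefinite, so no additional trimming occurs; in other words, the marginal PIRs reported are exactly the projections of the joint feasible set onto the $\byab$- and $\byba$-axes. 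Once this is verified, the conclusion follows immediately from Proposition \ref{prop: PIR for byab and byba without constraints} and the sign deduction above.
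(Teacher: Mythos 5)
Your core argument is exactly the paper's: the published proof is a single line stating that Proposition \ref{prop: PIR for byab and byba assuming same signs} follows by directly applying Assumption \ref{assump: same sign} to Proposition \ref{prop: PIR for byab and byba without constraints}, i.e., by translating the same-sign condition into $\byab\geq 0$, $\byba\geq 0$ and intersecting the Proposition \ref{prop: PIR for byab and byba without constraints} regions with the non-negative half-line, precisely as you do in your first two paragraphs.

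The one place you go beyond the paper is your third paragraph, and there you should be careful: you correctly identify that one must check whether the hypotheses $\byaa\geq 0$, $\bybb\geq 0$ further trim the regions, but you only announce a plan to show "no additional trimming occurs," and that claim is not true in general. From $\byaa=(\psi_0-\rho\sigma_{s1}\byab)/\sigma_{s0}$, requiring $\byaa\geq 0$ with $\rho>0$ forces $\byab\leq \psi_0/(\rho\sigma_{s1})$, and the stated upper endpoint $\sqrt{(\zeta_0-\psi_0^2)/\{(1-\rho^2)\sigma_{s1}^2\}}$ satisfies this bound if and only if $\rho^2\leq \psi_0^2/\zeta_0=\Cor^2\{Y(0),S(0)\}$. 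So whenever $\rho$ exceeds the observed outcome--intermediate correlation in the control arm, the feasible set with $\byaa\geq 0$ is a strict subset of the stated interval (and symmetrically for $\byba$ via $\bybb\geq 0$). Positive semidefiniteness of the observed covariance matrix does not rescue this. That said, the paper's own proof makes the identical silent leap, so this is best read as an implicit convention in the proposition (the non-negativity of $\byaa,\bybb$ is used only to fix the branch of the sign ambiguity) rather than as a defect specific to your reduction; if you want your write-up to be airtight, either add the extra truncation at $\psi_0/(\rho\sigma_{s1})$ (resp.\ $\psi_1/(\rho\sigma_{s0})$) when $\rho>0$, or state explicitly that the sign hypotheses are used only to select the branch.
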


For $\byaa$ and $\bybb$ with signs different from those assumed in Proposition \ref{prop: PIR for byab and byba assuming same signs}, the corresponding partial identification regions for $\byab$ and $\byba$ can be obtained by flipping the signs to match those of $\byaa$ and $\bybb$, respectively. This result also requires us to know the sign of $\beta_{tt}$ for $t=0,1$. If this is not known a priori, then one can regress the observed $Y$ on the observed $S$ and $\bX$, given $T=t$, to determine the sign for $\beta_{tt}$. This approach is not guaranteed to correctly identify the sign, but will only fail to do so in certain extreme situations, such as when $\rho$ is close to $-1$. This cuts the partial identification region in half, providing large benefits in the uncertainty in the resulting causal estimates. While these benefits may be large, it is not possible in many situations to know this assumption is true a priori. Now we present a second, weaker assumption, which should hold true in most applications. This second assumption effectively assumes that the unobserved intermediate, once we condition on the observed one, has a smaller impact on the outcome than the observed one had. This can be seen as a weakened version of principal ignorability as the principal ignorability assumption assumes that there is no effect of the unobserved intermediate, once we condition on the observed one. This is formalized in the following assumption.
\begin{assumption} [\label{assump: dominant ob effect}Dominant Observed Effect]
    Assume 
    \begin{equation*}
        R^2_{Y\sim S(1-T) \mid S(T),T,\bX}\leq R^2_{Y\sim S(T) \mid T,\bX}.
    \end{equation*}
\end{assumption}

Assumption \ref{assump: dominant ob effect} implies that the signal-to-noise ratio (SNR) of the unobserved intermediate, conditional on the observed intermediate, is no greater than that of the observed intermediate. This assumption relaxes principal ignorability by allowing the unobserved intermediate to affect the outcome, but precludes large effects of the unobserved intermediate. Although Assumption \ref{assump: dominant ob effect} is formulated in terms of Pearson's $R^2$, one could also formulate the assumption in terms of nonparametric partial $R^2$ values that would apply in broader contexts, as we see in Section \ref{sec:extensions}. This assumption reduces the width of the partial identification region substantially as shown in the following result. 

\begin{proposition}\label{prop: PIR for byab and byba assuming dominant ob effect 1}
    Suppose that Assumption \ref{assump: dominant ob effect} holds, $\Var\{Y(1)\mid S(1)\}\geq\Var\{Y(0)\mid S(0)\}$, $\frac{(\Var\{Y(0)\mid S(0)\})^2}{\Var(Y(0))}\geq\frac{(\Var\{Y(1)\mid S(1)\})^2}{\Var(Y(1))}$, and $\byaa>0$. The partial identification region for $\byab$ is 
    \begin{equation*}
        \Bigg[-\sqrt{\frac{\Var\{Y(0)\mid S(0)\}\Cor^2(Y(0),S(0))}{(1-\rho^2)\Var\{S(1)\}}},\sqrt{\frac{\Var\{Y(0)\mid S(0)\}\Cor^2(Y(0),S(0))}{(1-\rho^2)\Var\{S(1)\}}} \Bigg],
    \end{equation*}
    and the partial identification region for $\byba$ is
    {\footnotesize
    \begin{equation*}
        \begin{split}
            & \Bigg[-\sqrt{\frac{\Var\{Y(1)\mid S(1)\}-\Var\{Y(0)\mid S(0)\}/\Var(Y(0))}{(1-\rho^2)\Var\{S(0)\}}},-\sqrt{\frac{\Var\{Y(1)\mid S(1)\}-\Var\{Y(0)\mid S(0)\}}{(1-\rho^2)\Var\{S(0)\}}} \Bigg]\cup\\
            &\qquad \Bigg[\sqrt{\frac{\Var\{Y(1)\mid S(1)\}-\Var\{Y(0)\mid S(0)\}}{(1-\rho^2)\Var\{S(0)\}}},\sqrt{\frac{\Var\{Y(1)\mid S(1)\}-\Var\{Y(0)\mid S(0)\}/\Var(Y(0))}{(1-\rho^2)\Var\{S(0)\}}} \Bigg],
        \end{split}
    \end{equation*}
    }
    where $\byab$ and $\byba$ also satisfy (\ref{eq: beta constraint}).
\end{proposition}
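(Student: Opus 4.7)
The plan is to translate Assumption \ref{assump: dominant ob effect} into algebraic bounds on $\byab^2$ and $\byba^2$ under model (\ref{model:ContinuousWithCovariates}), and then intersect these with the constraint (\ref{eq: beta constraint}) from Proposition \ref{prop: PIR for byab and byba without constraints}. The key computational simplification is that in the joint Gaussian model both $R^2$ quantities entering Assumption \ref{assump: dominant ob effect} admit clean closed forms. For $T=0$, conditioning in the bivariate principal-strata model yields $\Var\{Y(0)\mid S(0),S(1),\bX\}=\sigma_y^2$ and $\Var\{Y(0)\mid S(0),\bX\}=\sigma_y^2+(1-\rho^2)\byab^2\sib^2$, so that
\begin{equation*}
R^2_{Y\sim S(1)\mid S(0),T=0,\bX}=\frac{(1-\rho^2)\byab^2\sib^2}{\Var\{Y(0)\mid S(0)\}},\qquad R^2_{Y\sim S(0)\mid T=0,\bX}=\frac{\psi_0^2}{\zeta_0}=\Cor^2(Y(0),S(0)),
\end{equation*}
where the latter is identifiable from the observed covariance matrix in (\ref{eq:MarginalDist}). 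Assumption \ref{assump: dominant ob effect} at $T=0$ therefore reads $(1-\rho^2)\byab^2\sib^2\le\Var\{Y(0)\mid S(0)\}\,\Cor^2(Y(0),S(0))$, which rearranges directly into the symmetric interval for $\byab$ stated in the proposition. An analogous calculation at $T=1$ gives the direct bound $\byba^2\le\Cor^2(Y(1),S(1))\Var\{Y(1)\mid S(1)\}/[(1-\rho^2)\Var\{S(0)\}]$.

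The next step is to use (\ref{eq: beta constraint}) to couple the two bounds. Substituting the $T=0$ upper bound on $\byab^2$ into (\ref{eq: beta constraint}) and invoking the identity $\Cor^2(Y(t),S(t))\Var\{Y(t)\mid S(t)\}=\Var\{Y(t)\mid S(t)\}-\Var\{Y(t)\mid S(t)\}^2/\Var\{Y(t)\}$ produces an induced upper bound on $\byba^2$ of $[\Var\{Y(1)\mid S(1)\}-\Var\{Y(0)\mid S(0)\}^2/\Var\{Y(0)\}]/[(1-\rho^2)\Var\{S(0)\}]$, while the same identity rewrites the direct $T=1$ bound as $[\Var\{Y(1)\mid S(1)\}-\Var\{Y(1)\mid S(1)\}^2/\Var\{Y(1)\}]/[(1-\rho^2)\Var\{S(0)\}]$. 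The hypothesis $\Var\{Y(0)\mid S(0)\}^2/\Var\{Y(0)\}\ge\Var\{Y(1)\mid S(1)\}^2/\Var\{Y(1)\}$ is precisely what makes the induced bound the binding one, recovering the stated upper endpoint for $|\byba|$. The matching lower endpoint $|\byba|\ge\sqrt{[\Var\{Y(1)\mid S(1)\}-\Var\{Y(0)\mid S(0)\}]/[(1-\rho^2)\Var\{S(0)\}]}$ is obtained by minimizing the left side of (\ref{eq: beta constraint}) via $\byab^2=0$, and the assumption $\Var\{Y(1)\mid S(1)\}\ge\Var\{Y(0)\mid S(0)\}$ makes it non-negative, forcing the region to split into two disjoint intervals. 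A symmetric check shows that under the same ordering the direct bound on $\byab$ is tighter than the bound induced from the $T=1$ side through (\ref{eq: beta constraint}), so the region for $\byab$ remains the single symmetric interval.

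The hardest part is not sharpness -- which follows from the constructive inversion already implicit in Proposition \ref{prop: PIR for byab and byba without constraints}, since for any $(\byab,\byba)$ satisfying (\ref{eq: beta constraint}) together with the unconstrained variance-based bound one recovers a valid $\sigma_y^2>0$ and compatible $\byaa,\bybb$ reproducing (\ref{eq:MarginalDist}) -- but rather the book-keeping needed to verify that among the \emph{four} candidate bounds on the pair $(\byab^2,\byba^2)$ (the two direct ones from Assumption \ref{assump: dominant ob effect} and the two induced through (\ref{eq: beta constraint})), the two auxiliary ordering conditions in the hypothesis are exactly what single out one binding bound per coefficient and leave the other two slack. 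The auxiliary condition $\byaa>0$ does not enter the magnitudes of the bounds, since only squared correlations appear, but is used implicitly to fix a consistent directional labeling when reconstructing the outcome regression from the observed-data covariance in (\ref{eq:MarginalDist}).
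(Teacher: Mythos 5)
Your proof is correct and follows essentially the same route as the paper's: both reduce Assumption \ref{assump: dominant ob effect} under the Gaussian model to the same squared-partial-correlation inequality, which the paper records as the lower bound $(\zeta_t-\psi_t^2)^2/\zeta_t\le\sigma_y^2$ and you record equivalently as direct bounds on $\byab^2$ and $\byba^2$, and both then combine this with the $\sigma_y^2$-parametrization and constraint (\ref{eq: beta constraint}) from the proof of Proposition \ref{prop: PIR for byab and byba without constraints}, with your ``which of the four bounds binds'' book-keeping matching the paper's use of the two ordering hypotheses. One remark: your derived endpoint $(\Var\{Y(0)\mid S(0)\})^2/\Var\{Y(0)\}$ agrees with what the paper's own proof yields, so the unsquared numerator $\Var\{Y(0)\mid S(0)\}/\Var(Y(0))$ appearing in the proposition's displayed region for $\byba$ appears to be a typo in the statement rather than a defect in your argument.
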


Note that the result depends on $\Var\{Y(1)\mid S(1)\}\geq\Var\{Y(0)\mid S(0)\}$, ${(\Var\{Y(0)\mid S(0)\})^2}\break /{\Var(Y(0))}\geq{(\Var\{Y(1)\mid S(1)\})^2}/{\Var(Y(1))}$, and $\byaa>0$ all being true. We should emphasize that similar bounds can be derived under any combination of directions for these inequalities, but we focus on one such choice for simplicity. By comparing the results in Propositions \ref{prop: PIR for byab and byba without constraints} and \ref{prop: PIR for byab and byba assuming dominant ob effect 1},
it is clear that Assumption \ref{assump: dominant ob effect} reduces the length of the partial identification region substantially. For $\byab$ the size of the region is reduced by a factor of $| \Cor(Y(0),S(0))| $, which reflects the strength of the association between the outcome and the observed intermediate within the control $(T=0)$ group. When $| \Cor(Y(0),S(0))|$ is very small, the benefits brought by Assumption \ref{assump: dominant ob effect} are large since it prevents assigning overly large values to the effect of the missing intermediate. 


One interesting aspect of this result, which we show in the proof of Proposition \ref{prop: PIR for byab and byba assuming dominant ob effect 1} in the Appendix, is that Assumption \ref{assump: dominant ob effect} is equivalent to assuming $$\dfrac{[\Var\{Y\mid S,\bX,T=t\}]^2}{\Var(Y\mid \bX,T=t)}\leq\Var\{Y(t)\mid S(t),S(1-t),\bX,T=t\},$$
which shows that the assumption is equivalent to a lower bound on the residual variance, $\sigma_y^2$. One important implication of this is the ability to apply such a constraint within our Bayesian modeling framework. One can first estimate this lower bound using the observed data, and can then place a truncated prior distribution on $\sigma_y^2$ that guarantees the bound holds and therefore Assumption \ref{assump: dominant ob effect} also holds. This is also important once semiparametric or nonparametric outcome models are used, because it can be more difficult in those settings to enforce restrictions on the estimated regression functions, but a truncated prior on the residual variance is similarly straightforward in that setting.

\section{Simulations and applications}
\label{Sec: Simulations and applications}
\subsection{Simulations}
\label{subsec: simulations}
Here, we examine the theoretical results obtained in the previous sections as well as the practical performance of our weakened identification assumptions for estimating principal causal effects. In the first set of simulation studies, we treat $\rho$ as known, and focus on partial identification of principal causal effects of interest under varying assumptions. Then, we explore the identification of $\rho$ under varying assumptions and sample sizes. Throughout, we do not include covariates $\bX$, but similar results would be obtained if covariates were additionally adjusted for. Under both scenarios, Gibbs sampling is used, with a single MCMC chain run for a total of 25,000 iterations, where the first 5,000 are burn-in, and a thinning interval of 30 is applied. The details of the Gibbs sampler are included in Appendix B.

\subsubsection{Performance under a known \texorpdfstring{$\rho$}{rho}}
In this scenario, the data-generating model is based on the analysis of \cite{bartolucci_grilli_2011} and the dataset from the Lipid Research Clinics Coronary Primary Prevention Trial (LRC-CPPT) (\cite{efron_feldman_1991}). The parameter values are based on the estimates from previous work with small adjustments in order to match Model (\ref{model:ContinuousWithCovariates}). Specifically, the data are generated from the following data generating process:
\begin{equation}
    \begin{split}
        Y_i(0)\mid \bd{U}_i&\sim N(-0.5+11.5S(0),14^2),\\
        Y_i(1)\mid \bd{U}_i&\sim N(-0.5+11.5S(0)+96S(1),14^2),\\
        \bd{U}_i&\sim N((0.89,0.70)^T,\boldsymbol{\Sigma}_S),
    \end{split}
    \label{eq:Setting1}
\end{equation}
where \[
\boldsymbol{\Sigma}_S=\begin{pmatrix}
    0.25^2 & 0.75*0.25*0.25\\
    0.75*0.25*0.25 & 0.25^2
\end{pmatrix}.
\]

The parameter values here are based on previous studies (\cite{bartolucci_grilli_2011}, \cite{schwartz_li_mealli_2011}), both of which conducted analyses using the same outcome models. We assume that $\rho$ is known and focus on partial identification of the PCEs under violations of principal ignorability. One can verify that Model (\ref{eq:Setting1}) satisfies Assumption \ref{assump: dominant ob effect}, but only partially satisfies Assumption \ref{assump: same sign} because $\beta_{01} = 0$ and therefore does not have the same sign as $\beta_{00}$, which is positive. We generate 500 simulated datasets from Model (\ref{eq:Setting1}) with $n=300$. For each simulated dataset, we run three MCMC chains: the first is without either Assumption \ref{assump: same sign} or \ref{assump: dominant ob effect}, the second incorporates Assumption \ref{assump: dominant ob effect}, and the third incorporates Assumption \ref{assump: same sign}. Note that we only partially apply Assumption \ref{assump: same sign} by applying it to the signs of $(\beta_{10}, \beta_{11})$, which are equal. For the model without constraints or assumptions, we consider noninformative conjugate priors as follows:
\begin{equation*}
    \begin{split}
        &\bbeta_t\sim N(0,10^5\bd{I}_2),\ 
        \lambda_t\sim N(0,10^5),\ 
        \phi_t\sim N(0,10^5),\\
        &\sigma_y^2\sim IG(10^{-3},10^{-3}),\ 
        \sigma_{s}^2\sim IG(10^{-3},10^{-3}),
    \end{split}
\end{equation*}
where $IG(a,b)$ represents the inverse gamma distribution and $\bd{I}_2$ is a $2\times2$ identity matrix. We have noticed empirically that partial identification can lead to inefficient MCMC sampling with parameters getting trapped at extreme values. In this case, the residual variance $\sigma_y^2$ becomes trapped at values close to 0, so a lower bound of $0.05\min_{t=0,1}\Var(Y\mid T=t)$ is used throughout when updating $\sigma_y^2$. This improves the poor mixing, but the issue still persists to a lesser degree. For MCMC chains under Assumption \ref{assump: dominant ob effect}, since Assumption \ref{assump: dominant ob effect} is equivalent to adding a lower bound for $\sigma_y^2$, then a truncated inverse gamma distribution for $\sigma_y^2$ is utilized, where a lower bound for this truncation is the empirical estimate of $0.9\min_{t=0,1}\left((\Var[Y\mid S,T=t])^2/\Var(Y\mid T=t)\right)$. For MCMC chains that partially incorporate Assumption \ref{assump: same sign}, a truncated normal prior for $\bbeta_1$ is used to ensure that $(\byba>0,\bybb>0)$.

We target PCEs defined on ($S(0),S(1)$) at the combinations of the quartiles of the intermediate values, as used by \cite{jin_rubin_2008,bartolucci_grilli_2011,schwartz_li_mealli_2011}. Since the true values of the PCEs in this scenario depend only on $S(1)$, we fix $S(0)$ at its median value. The simulation results are shown in Table \ref{table:CredibleIntervals}. All empirical coverage rates (ECRs) are above 95\%, indicating that valid credible intervals (CIs) are obtained under any of the proposed assumptions. The average widths of CIs were compared, and the proposed assumptions significantly reduced average widths. Particularly, for $S(1)$ at the first and third quartiles, Assumption \ref{assump: dominant ob effect} provided 42\% reductions in the average width and Assumption \ref{assump: same sign} provided 26\% reductions relative to not making either assumption.

\setlength{\tabcolsep}{3pt}
\begin{table}[ht]
\def~{\hphantom{0}}
\centering
\begin{tabular}{ccrrrrrrrrr}
 \multicolumn{2}{c}{} & \multicolumn{3}{c}{No constraints} & \multicolumn{3}{c}{Dominant effect} & \multicolumn{3}{c}{Same sign} \\ 
$\bd{u}$ & Truth & Mean & ECR & Width & Mean & ECR & Width & Mean & ECR & Width \\ 
$(0.89,0.18)$ & 17 & 17 & 0.99 & 43 & 16 & 1.00  & 25  & 23  & 0.97 & 32  \\ 
  $(0.89,0.35)$ & 34  & 34  & 0.99 & 13  & 34  & 1.00   & 9 & 34  & 0.98 & 12  \\ 
  $(0.89,0.52)$ & 50  & 50  & 0.99 & 43  & 51  & 1.00 & 25  & 45  & 0.98 & 32  \\ 
\end{tabular}
\caption{Comparisons of average posterior mean (Mean), empirical coverage rates (ECR) and average width of credible intervals (Width) under varying assumptions that allow for violations of principal ignorability.}
\label{table:CredibleIntervals}
\end{table}

\subsubsection{Identifying \texorpdfstring{$\rho$}{rho} under different constraints}
\label{subsec:Scenario2}

Now we use a related data generating process to further explore the ability to identify $\rho$ under varying assumptions. As discussed previously, one plausible identification assumption for $\rho$ is $\byab=0$ and $\byaa=\byba$. This guarantees the identification of $\rho$ when we also assume $\rho>0$. To investigate both identification and posterior consistency, we vary $n \in \{300, 600, 1200\}$ and we generate one dataset for each sample size. For each dataset, we run MCMC under no constraints, one constraint ($\byab=0$), and two constraints ($\byab=0$, $\byaa=\byba$). To avoid the sign issue, we also constrain $\rho$ to be in $(0.00,0.95)$, where setting $\rho<0.95$ prevents MCMC sampling from getting trapped around the extreme value $\rho=1$.

We generate data from the following data generation process:
\begin{equation*}
    \begin{split}
        Y_i(0)\mid \bd{U}_i&\sim N(0.9+1.2S(0),0.5^2),\\
        Y_i(1)\mid \bd{U}_i&\sim N(0.5+1.2S(0)+1.2S(1),0.5^2),\\
        \bd{U}_i&\sim N((0.3,0.5)^T,\boldsymbol{\Sigma}_S),
    \end{split}
    \label{eq:Setting2}
\end{equation*}
where $\boldsymbol{\Sigma}_S=\begin{pmatrix}
    1 & 0.75\\
    0.75 & 1
\end{pmatrix}$. The same noninformative conjugate priors are used and we additionally place a flat prior on the correlation between potential intermediates, $\rho\sim U(0,0.95)$.

The posterior distributions of $\rho$ across all three scenarios and sample sizes are shown in Figure \ref{fig:PoseriorPlots}. Regardless of sample size, there is a notable spike around the true value of $\rho=0.75$ when we place two constraints on the model parameters. There appear to be mild spikes close to 1 when there are no constraints or one constraint placed on the model, but this is simply due to poor mixing as the MCMC sampler gets stuck at $\rho$ values close to 1. When we alternatively assume $\rho\sim U(0,0.9)$, these spikes are reduced. Importantly, as $n$ increases, their height does not increase, whereas the height of spike under two constraints increases significantly, which points towards consistency of the correlation parameter. In Appendix D, we include more plots of the estimated posterior variance for more sample sizes, which show that asymptotically, the posterior variance is $O(n^{-1})$. 

\begin{figure}[tb]
    \centering
    \includegraphics[width=0.75\linewidth]{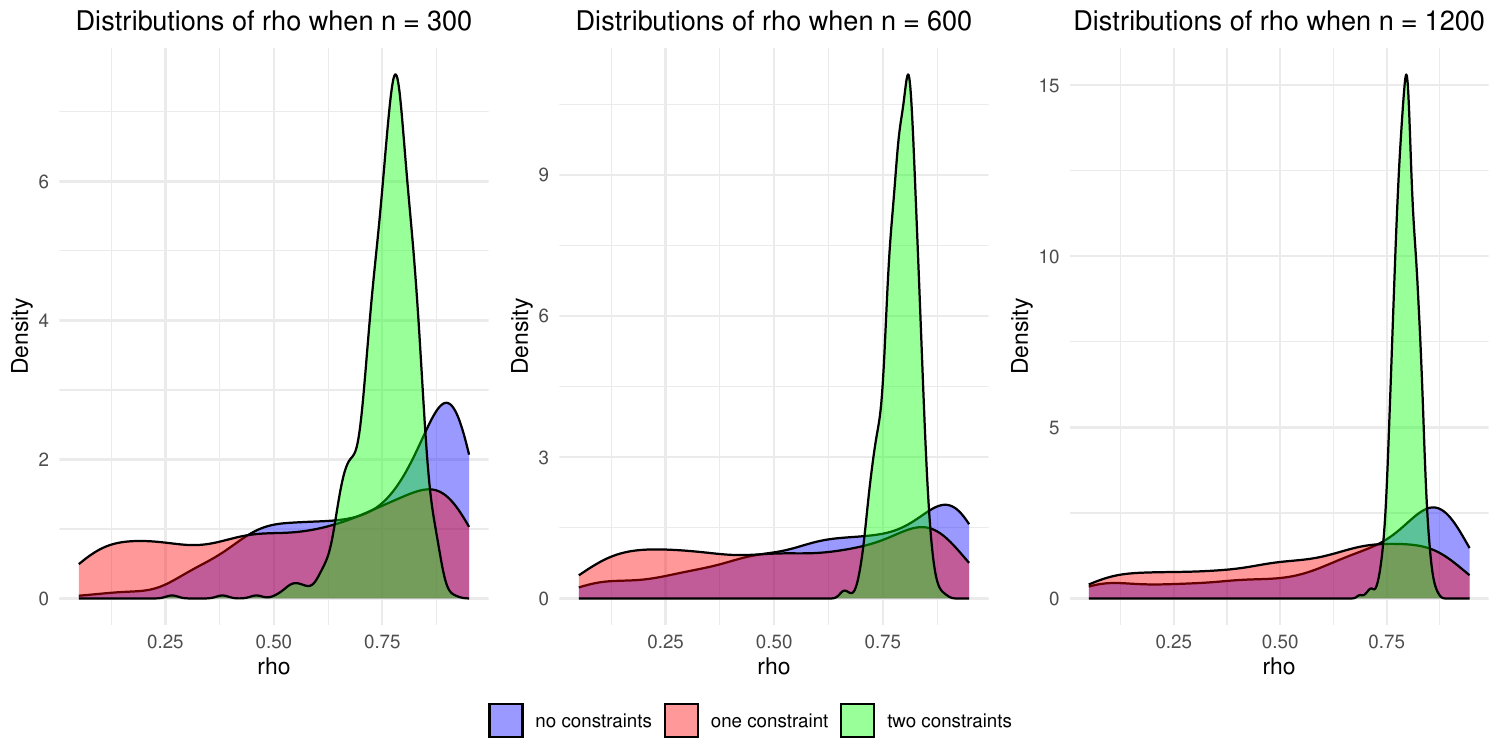}
    \caption{The posterior distributions of $\rho$ corresponding to $n=$300, 600, 1200 under three different amounts of constraints on the model parameters.}
    \label{fig:PoseriorPlots}
\end{figure}

\subsection{Analysis of ACTG trial data}
The ACTG 175 data set (\cite{hammer_katzenstein_hughes_gundacker_schooley_haubrich_henry_lederman_phair_niu_et_1996}) is available in the R package \emph{speff2trial} and was collected from a randomized clinical trial with the purpose of comparing monotherapy with zidovudine or didanosine with combination therapy with zidovudine and didanosine or zidovudine and zalcitabine in adults infected with the human immunodeficiency virus type I whose CD4 T cell counts were between 200 and 500 per cubic millimeter. This dataset was investigated by \cite{tomasz_burzykowski_geert_molenberghs_buyse_2005} and \cite{zhang_yang_2025}, with a focus on whether a short-term endpoint can serve as a valid surrogate for a long-term endpoint. Here, we explore the same question for this dataset, though we do not assume principal ignorability holds as in the analysis of \cite{zhang_yang_2025}, and we explore the extent to which our various proposed assumptions can provide informative partial identification regions. We consider the short-term endpoint, CD4 count at $20\pm5$ weeks, as the intermediate variable $S$, and the long-term endpoint, CD4 count at $96\pm5$ weeks, as the outcome $Y$. The treatment $T$ is 0 if the patient is treated with zidovudine only and is 1 if the patient is treated with zidovudine + didanosine. Covariates $\bX$ include age, weight, hemophilia, homosexual activity, history of intravenous drug use, Karnofsky score, non-zidovudine antiretroviral therapy prior to initiation of study treatment, zidovudine use in the 30 days prior to treatment initiation, race, gender, antiretroviral history, and a symptomatic indicator. We apply Model (\ref{model:ContinuousWithCovariates}) throughout with a fixed value of $\rho = 0.5$. Note that this model does not assume principal ignorability, which we show in Section \ref{sec: partial identification when PI fails} can lead to substantially wider credible intervals for the principal causal effects of interest. For this reason, we also explore incorporating both the dominant effect assumption and the same sign assumptions proposed in Section \ref{sec: alternative assumptions}, which leads to three different approaches in total as we also consider the case where no assumptions are incorporated. Throughout we consider a range of principal strata that can be broken into two types. The first looks at subjects for whom $S(0) = S(1) = s$ and we vary the value of $s$. The second set of principal strata fixes the value of $S(0) = 340$ and then varies $S(1)$ to see how the causal effect for the final endpoint depends on the effect in the surrogate endpoint. 

The results can be found in Figure \ref{fig:daPces}, which shows both the posterior mean and corresponding 95\% credible interval for all estimands of interest across the three proposed approaches. One of the most apparent findings from the results is that inference without principal ignorability and no additional assumptions (first column of Figure \ref{fig:daPces}) is less informative with wide credible intervals, but the two proposed assumptions (2nd and 3rd columns) can substantially shrink the size of the credible intervals, particularly for the same sign assumption. The results under both the dominant effect and same sign assumptions also paint a fairly clear picture about the performance of $S$ as a surrogate endpoint. When $S(0) = S(1) = s$, there are slightly positive estimates with credible intervals that generally contain zero, though the intervals under the same sign assumption are strictly positive for certain values of $s$. This suggests that if there is no effect of the treatment by the surrogate endpoint at 20 weeks, then there likely won't be a big effect of the treatment by the terminal endpoint at 96 weeks, though the results are suggestive that there is still some beneficial effect of treatment even when $S(0) = S(1)$. This indicates that it may take time for the full effect of treatment to realize, though this effect is not substantial. The results from when $S(0) \neq S(1)$ also show that the surrogate endpoint is a valid surrogate in the sense that when $S(1) - S(0)$ is larger, the treatment effect for the final endpoint described by the magnitude of $Y(1) - Y(0)$ is also large. These results are similar to those obtained in recent analyses assuming principal ignorability \citep{zhang_yang_2025}. This shows the benefit of our proposed assumptions, which are arguably weaker than principal ignorability, but are still able to provide meaningful insights on principal causal effects. If principal ignorability holds, we can still obtain similar results without greatly increasing the widths of the credible intervals, while if principal ignorability fails, our results should be more robust.

\begin{figure}[tb]
    \centering
    \includegraphics[width=0.98\linewidth]{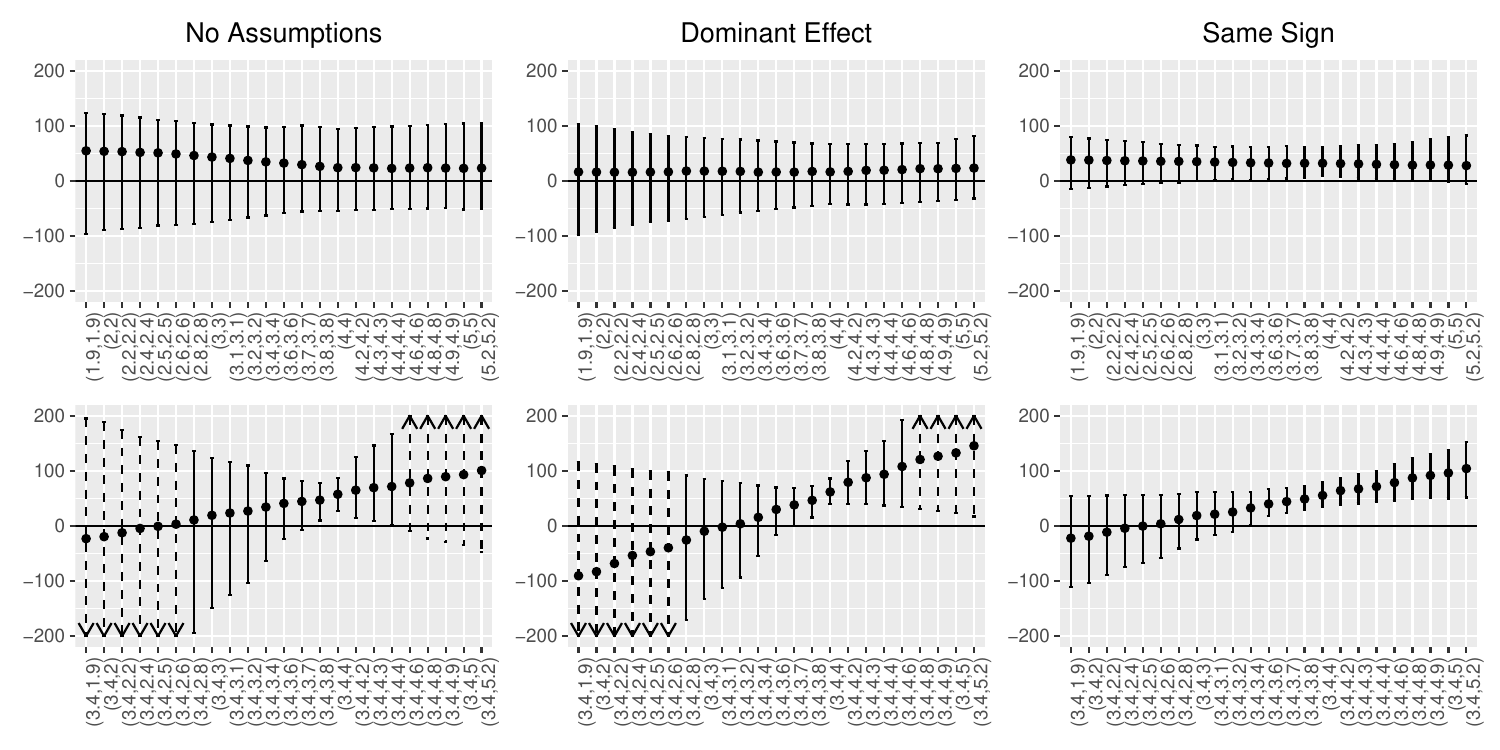}
    \caption{The first row shows PCEs and corresponding 95\% credible intervals for the principal strata defined by $S(0)=S(1) = s$ with varying $s$. The second row corresponds to the principal strata with $S(0)=340$ and increasing $S(1)$. Note that $S(0)$ and $S(1)$ are scaled by $10^2$ in the plot. From left to right, the columns correspond to different assumptions: no constraints, dominant effect, and same sign, respectively. The dashed line in the CI plots implies that the CI is truncated, and the arrow indicates which side is truncated.}
    \label{fig:daPces}
\end{figure}

\section{Extensions to Bayesian nonparametric modeling} \label{sec:extensions}

In Sections \ref{sec: partial identification when PI fails} and \ref{sec: alternative assumptions}, we discussed partial identification regions and alternative assumptions when principal ignorability fails. While many of these results were described in terms of conditional variances not unique to a specific model, they were derived under the assumption of a linear outcome model. This facilitated derivations and ease of exposition, but now we show that many similar ideas hold in more general modeling contexts, which has broader implications for the analysis of principal causal effects when principal ignorability is violated. 

We consider directly extending Model (\ref{model:ContinuousWithCovariates}) to a more flexible modeling framework that does not make the same, potentially restrictive, parametric assumptions. Specifically, we can model the outcomes as follows:
\begin{equation}
    \begin{split}
        Y_{i}\mid \bd{U}_i,\bd{X_i},T_i=t&\sim \mu_{yt}(\bd{X}_i,S_i)+\mu_{yct}(\bX_i,\bd{U}_i)+\epsilon_i,\\
    \bd{U}_i\mid \bX_i&\sim N(\bmu_{s}(\bX),\bSigma_{s}),
    \label{model:BART}
    \end{split}
\end{equation}
where $\epsilon_i\sim N(0,\sigma_y^2)$, $\epsilon\ind \bd{U}_i$, and $E(Y_{it}\mid S_i,\bd{X_i},T_i=t)=\mu_{yt}(\bd{X}_i,S_i)$.  The subscript $c$ implies that $E\{\mu_{yct}(\bX_i,\bd{U}_i)\mid S_i,\bX_i\}=0$, which means it is centered. Therefore we can view $\mu_{yct}(\bX_i,\bd{U}_i)$ as a nonparametric extension of $\beta_{t,1-t} S(1-t)$ as it controls the degree of violations of principal ignorability, and if $\mu_{yct}(\bX_i,\bd{U}_i)=0$, then principal ignorability holds. Our results here are general and hold for any nonparametric Bayesian prior placed on these functions, but either Gaussian processes or Bayesian additive regression tree (BART, \cite{chipman_george_mcculloch_2010}) priors would likely be used. The latter of which has been used for estimating principal causal effects \citep{kim2024bayesian} and is commonly used more generally within causal inference \citep{hill_2011, hahn_murray_carvalho_2020, linero2023and}.

\subsection{Partial identification related to extended sign issues}

Without principal ignorability, one aspect of partial identification concerns the unidentified signs within a parametric linear outcome model. In the nonparametric case, this concept generalizes to involve a specific class of transformations known as invariant transformations.

\begin{definition}
    A transformation $M$ of a random variable $Z$ is said to be an invariant transformation if $M(Z)$ and $Z$ follow the same distribution. 
\end{definition}

Let $\mathcal{M}(Z)$ denote the collection of all invariant transformations of a random variable $Z$. To formalize the role of invariant transformations in identifying distributions, we present the following result, which shows that principal causal effects are not identified in this setting.

\begin{theorem} \label{thm:UnidentifiedInvariant}
    For any invertible $M\in\mathcal{M}(S(1-t)\mid S(t),\boldsymbol{X})$, let $Y'(t)=\mu_{yt}\{\bX,S(t)\}+\mu_{yct}[\bX,S(t),M^{-1}\{S(1-t)\}]+\epsilon$. Then, marginally, $(Y'(t),S(t))\mid \bd{X}$ and $(Y(t),S(t))\mid \bd{X}$ follow the same distribution.
\end{theorem}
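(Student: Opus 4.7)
The plan is to show that the joint law of $(Y(t), S(t)) \mid \bX$ depends on the unobserved $S(1-t)$ only through its conditional distribution given $(S(t), \bX)$, so that replacing $S(1-t)$ by $M^{-1}(S(1-t))$ inside $\mu_{yct}$ cannot change the observed-data law whenever $M^{-1}$ preserves that conditional distribution.

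First, I would factor the joint density of $(Y(t), S(t)) \mid \bX = \bx$ as $f_{S(t) \mid \bX}(s \mid \bx)\, f_{Y(t) \mid S(t), \bX}(y \mid s, \bx)$. The marginal factor involves only $S(t)$ and $\bX$, which are untouched by the construction of $Y'(t)$, so it suffices to prove that $Y(t) \mid S(t) = s, \bX = \bx$ and $Y'(t) \mid S(t) = s, \bX = \bx$ share the same distribution. Using model \eqref{model:BART} together with the stated independence $\epsilon \ind \bd{U}$, I would write
\begin{equation*}
Y(t) \mid S(t) = s, \bX = \bx \;\stackrel{d}{=}\; \mu_{yt}(\bx, s) + \mu_{yct}(\bx, s, V) + \epsilon,
\end{equation*}
where $V \sim S(1-t) \mid S(t) = s, \bX = \bx$ and $\epsilon \sim N(0, \sigma_y^2)$ independent of $V$. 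By the construction of $Y'(t)$, the analogous representation holds with $V$ replaced by $M^{-1}(V)$. Since $\mu_{yt}(\bx, s)$ and $\epsilon$ are common to both, the conditional laws agree if and only if $\mu_{yct}(\bx, s, V)$ and $\mu_{yct}(\bx, s, M^{-1}(V))$ agree in distribution.

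Next I would invoke the invariance assumption: by Definition of $\mathcal{M}(S(1-t) \mid S(t), \bX)$, the transformed random variable $M(V)$ has the same conditional law as $V$ given $(S(t), \bX)$. I would then observe that when $M$ is invertible, applying $M^{-1}$ to both sides of $M(V) \stackrel{d}{=} V$ gives $V \stackrel{d}{=} M^{-1}(V)$, so $M^{-1}$ is itself an invariant transformation. Consequently $M^{-1}(V) \stackrel{d}{=} V$ conditionally on $(S(t), \bX)$, and applying the measurable map $u \mapsto \mu_{yct}(\bx, s, u)$ preserves this equality in distribution. Adding the independent noise $\epsilon$ and the deterministic $\mu_{yt}(\bx, s)$ concludes that $Y(t) \mid S(t), \bX$ and $Y'(t) \mid S(t), \bX$ coincide in law, and combining with the common marginal of $S(t) \mid \bX$ gives the result.

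The main obstacle I anticipate is making the invariance step watertight: one must carefully track that the relevant notion of invariance is with respect to the \emph{conditional} distribution given $(S(t), \bX)$, not the marginal of $S(1-t)$, and that invertibility of $M$ is what upgrades the one-sided invariance $M(V) \stackrel{d}{=} V$ to the symmetric statement $M^{-1}(V) \stackrel{d}{=} V$ needed in the argument. Once that small lemma is recorded, the rest reduces to the routine observation that measurable functions preserve equality in distribution and that $\epsilon$ being independent of $\bd{U}$ lets us add noise without disturbing the match.
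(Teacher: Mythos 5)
Your proposal is correct and follows essentially the same route as the paper: both reduce the claim to equality of the conditional laws of $Y(t)$ and $Y'(t)$ given $(S(t),\bX)$ (the marginal of $S(t)\mid\bX$ being untouched), and both hinge on the observation that invariance plus invertibility of $M$ yields $M^{-1}(V)\stackrel{d}{=}V$ for $V\sim S(1-t)\mid S(t),\bX$ — the paper expresses this as a change of variables in the mixture integral for the conditional density, while you express it as equality in distribution propagated through the measurable map $\mu_{yct}$ and the independent noise $\epsilon$. These are the same argument in two notations, so no further comparison is needed.
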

Theorem \ref{thm:UnidentifiedInvariant} shows that $\mu_{yct}(\bX, \bd{U})$ is unidentifiable, and therefore $E\{Y(t)\mid S(t),S(1-t),\boldsymbol{X}\}$ and any PCEs of interest are also unidentifiable. Because of the unobserved nature of $S(1-t)$, the flexibility of nonparameteric outcome models allows various possible ways for $S(1-t)$ to contribute to the observed outcome. These plausible outcome models can lead to completely different or even contradictory PCEs. For instance, if $(S(t), S(1-t))$ follows a bivariate Gaussian distribution, then one such invariant transformation can be constructed as $M: s \to 2m_{1-t}(s_t,\bd{x}) - s$, where $m_{1-t}(s_t,\bd{x})$ denotes $E\{S(1-t)\mid S(t)=s_t,\bd{X}=\bx\}$. Clearly, $M$ is invertible and $M^{-1} = M$. Partial identification arises if $\mu_{yct}\{\bx,s_t,2m_{1-t}(s_t,\bd{x})-s_{1-t}\} \neq \mu_{yct}(\bx,s_t,s_{1-t})$, which is typically the case. To provide intuition for this, this inequality would hold in the linear model in (\ref{model:ContinuousWithCovariates}) if $\beta_{t,1-t}$ is not equal to 0, or equivalently, when principal ignorability fails. 

Although we assume a bivariate Gaussian distribution for $(S(t),S(1-t))$ in Model (\ref{model:BART}), Theorem \ref{thm:UnidentifiedInvariant} does not rely on this assumption and holds for $(S(t),S(1-t))$ under any distribution. Theorem \ref{thm:UnidentifiedInvariant} generalizes the sign issues leading to partial identification in the linear model setting to a more general partial identification result based on a class of invariant transformations. Usually, the number of transformations in this class is infinite. However, most are artificial, resulting in complex and unusual potential outcome models $Y'(t)$, which can easily be ruled out under reasonable modeling constraints. For cases like the sign issue, one can use domain knowledge to determine the correct transformation or employ reasonably uninformative priors to rule out other, incorrect outcome models.

\subsection{Partial identification based on magnitude of unobserved intermediate effect}

Under linear outcome models, the sign of the effect of the unobserved intermediate is not identified without additional constraints, but even if the sign were known, the magnitude of the effect is also not identified. Similarly, for nonparametric outcome models, even if the invariant transformations discussed previously were not an issue, we still are not able to identify the magnitude of the effect of the unobserved intermediate variable. In Model (\ref{model:BART}), this is represented by the magnitude of the $\mu_{yct}(\bX, \bd{U})$ function. Conditional on $(S(t), \bX)$,the conditional variance of $\mu_{yct}$ is also the conditional second moment, meaning that its variance can be used to describe the magnitude of $\mu_{yct}(\bX, \bd{U})$. Since $\Var\{Y(t)\mid S(t),\bX\}=\sigmay^2+\Var\{\mu_{yct}\mid S(t),\bX\}$, and $\Var\{Y(t)\mid S(t),\bX\}$ is identifiable under standard assumptions, the residual variance $\sigmay^2$ can determine the magnitude of $\Var\{\mu_{yct}\mid S(t),\bX\}$. We can use the residual variance to control the contribution of the unobserved intermediate, with higher $\sigmay^2$ indicating less variability of $\mu_{yct}$ and lower $\sigmay^2$ indicating greater variability of $\mu_{yct}$. It is not clear what a reasonable bound for $\sigma_y^2$ is, however, we can use an extension of Assumption \ref{assump: dominant ob effect} to the nonparametric setting to guide this choice. Specifically, we can let $\eta_{A \sim B \mid  C}^2$ represent nonparametric partial $R^2$ values, which are defined as
$$\eta_{A \sim B \mid  C}^2 = \dfrac{\Var\{E(A\mid B, C)\}-\Var\{E(A\mid C)\}}{\Var(A)-\Var\{E(A\mid C)\}}.$$
We then make the following assumption on the contribution of the unobserved intermediate.
\begin{assumption}[\label{assump: generalized dominant ob effect}Generalized Dominant Observed Effect] 
Assume 
\begin{equation*}
        \eta^2_{Y\sim S(1-T)\mid S(T),T,\bX}\leq \eta^2_{Y\sim S(T)\mid T,\bX}.
\end{equation*}
\end{assumption}
Similar to before, this assumption places a restriction on the impact of the unobserved intermediate, which is analogous to restricting the magnitude of $\mu_{yct}$. This assumption implies the following inequality
\begin{equation*}
        \dfrac{(\Var \{Y(t)\}-\Var[E\{Y(t)\mid S(t),\bX\}])^2}{\Var\{Y(t)\}-\Var[E\{Y(t)\mid \bX\}]}\leq \sigma_y^2.
        \label{eq: EquivalentExpression}
\end{equation*}
This shows that we can control the variability of $\mu_{yct}$ through a lower bound on $\sigma_y^2$, similarly to what we did in the case of parametric, linear models. This is crucially important, because implementing a constraint on the variability of $\mu_{yct}$ is difficult to do in practice with complex models, but a constraint on $\sigma_y^2$ can easily be implemented within the Bayesian framework through a truncated prior distribution. Unlike in the linear, parametric outcome model, it is not straightforward to derive the implications of this assumption on the resulting size of the partial identification region for causal parameters of interest, though it is expected that the reduction will be large in many instances.

\section{Discussion} \label{sec:discussions}
This paper formalizes inference on principal causal effects when principal ignorability is violated by deriving partial identification results for outcome models when this assumption does not hold, and investigating the implications of this assumption on identification of the crucial association parameter $\rho$. We focused our results on settings with continuous intermediates as this is the scenario that relies most heavily on principal ignorability, though we discuss results for binary intermediates in Appendix C. Our results show that there is an inherent trade-off between the strength of assumptions made and the size of partial identification regions. If principal ignorability, or other assumptions, are not made, then all parameters are at best partially identified, and the widths of the partial identification regions for parameters of interest can be exceedingly wide. On the other hand, if strong assumptions are made, then all unknown parameters, including the typically unidentified $\rho$ parameter, can be consistently estimated. We have proposed alternative assumptions that should hold in most applications, which can be applied to sharpen inference and reduce the widths of partial identification regions under mild assumptions. Lastly, many of our results held for linear, parametric outcome models, but we showed that similar ideas can be applied for fully nonparametric outcome models, which are popular in causal inference.

There are a number of areas for future research that could build on this work. One area of interest may be the implementation of Model (\ref{model:BART}), along with an empirical examination of the extended sign and magnitude issues introduced by violations of principal ignorability. It would also be of interest to study the identification of $\rho$ in this scenario and whether this is possible under certain constraints on the outcome model functions. Additionally, it would be of interest to see if other, plausible assumptions could be used in conjunction with those seen in this manuscript to sharpen inference on partial identification regions. For example, we have not made the monotonicity assumption throughout because it is not sufficient for identification with continuous intermediates, though it is plausible in many applications. It may provide significant information on the distribution of $\boldsymbol{U}$ that could lead to smaller partial identification regions, or more efficient estimation of $\rho$ when it is identified.


\begin{appendix}
\section{Proofs}
\label{appendix:Proof}
\subsection{Proofs Relating to Model 
(\ref{model:ContinuousWithCovariates})}

Consider the following reparameterization for the outcome model in (\ref{model:SimpleBayesianModel}):
\begin{equation*}
    \begin{split} E[Y_{i}(t) \mid \bd{U}_i,\bX_i,T_i=t]&=\bbeta_{t}^T(\bd{U}_i-(\balpha^T\bX_i,\balpha^T\bX_i)^T)+\lambda_t+\bgamma^T_t\bX_i.
    \end{split}
    \label{model:ReparameterizedSimpleModel}
\end{equation*}
where $\bgamma_t = \bgamma + (\beta_{t0}+\beta_{t1})\balpha$. Given $T_i=t$, consider following change of variable:
\begin{equation}
    \label{eq: removing covariates}
    \begin{split}
        Y_i'(t)&=Y_i(t)-\bgamma_t^T\bX_i,\\
        S'_i(t)&=S_i(t)-\balpha^T\bX_i, \qquad t=0,1.
    \end{split}
\end{equation}

An important benefit of (\ref{eq: removing covariates}) is that $(\bgamma_0,\bgamma_1,\balpha)$ are identifiable in model (\ref{model:ContinuousWithCovariates}). A short explanation is as follows:
Apparently, $\balpha$ is identifiable. Since 
\begin{equation*}
\begin{split}
    E[Y_i(t)|S'_i(t),\bX_i,T_i=t]&=\lambda_t+\beta_{t,1-t}E[S'_i(1-t)|S'(t),T_i=t]+\beta_{tt}S'(t)+\bgamma_t^T\bX_i,
\end{split}
\end{equation*}
$\bgamma_t$ is also identifiable for $t=0,1$.

Then,
\begin{equation*}
    \begin{split}
        E[Y'_{i}\mid \bd{U}'_i,\bX_i,T_i=t]&=\bbeta_{t}^T\bd{U}'_i+\lambda_t,\\
        E[\bd{U}'_i\mid \bX_i]&= (\phi_0,\phi_1)^T.
    \end{split}
\end{equation*}

For proofs of Theorems \ref{thm:IdentifyRho} and \ref{thm:ApproximateVariance}, since
$(\bbeta_0,\bbeta_1,\balpha,\bgamma)$ all are assumed known, then we can apply (\ref{eq: removing covariates}). As for proofs of Propositions \ref{prop: PIR for byab and byba without constraints}, \ref{prop: PIR for byab and byba assuming same signs} and \ref{prop: PIR for byab and byba assuming dominant ob effect 1}, since $(\bgamma_0,\bgamma_1,\balpha)$ are identifiable, then we can also apply (\ref{eq: removing covariates}). 

\subsubsection{Proof of Theorem \ref{thm:IdentifyRho}}
 \label{appendix:ProofOfThm1}
\begin{proof}
    We apply (\ref{eq: removing covariates}) and let $Y'_i=Y'_i(T_i)$ and $S'_i=S'_i(T_i)$. Throughout the proof, we always condition on $\bX$ and $T$, unless specified.

    The conditional marginal pdf of $(Y'_i, S'_i)$ is 
    \begin{equation}
    (Y'_i,S'_i)\mid \boldsymbol{X}_i,T_i=t\sim N\left((\mu'_{yt},\phi_t),\begin{pmatrix}
    \zeta_t& \psi_t\sigma_{st} \\
    \psi_t\sigma_{st} &  \sigma^2_{st},
    \end{pmatrix}\right),
    \label{eq: marginal distribution after reparameterization}
    \end{equation}
    where $\mu'_{yt} = \bbeta_{t}^T\bd{\phi}_i+\lambda_t$, $\zeta_{t}=\sigma_y^2+\bbeta_t^T\bSigma_{s}\bbeta_t$, and $\psi_t=\sigma_{st}\beta_{tt}+\rho\sigma_{s,1-t}\beta_{t,1-t}$. 
    
   The conditional marginal pdf of $(Y'_1, \dots, Y'_n, S'_1, \dots, S'_n)$ given $\bd{X}_i=\bx_i$ and $T_i=t_i$ is
    \begin{equation*}
        p(\cdot)= \prod_{i=1}^n\bigg((2\pi\sigma_{s,t_i})^{-1}(\zeta_{t_i}-\psi_{t_i}^2)^{-1/2}\exp\bigg(-\frac{1}{2}\bigg(\frac{(y_i''-\psi_{t_i}s_i''/\sigma_{s,t_i})^2}{\zeta_{t_i}-\psi_{t_i}^2}+\frac{s_i''^2}{\sigma_{s,t_i}^2}\bigg)\bigg)\bigg),
    \end{equation*}
    where $y''_i=y'_i-\lambda_{t_i}-\bmu_{s}^T\bbeta_{t_i}$, $s''_i=s'_i-\phi_{t_i}$, and $\zeta_{t_i}-\psi_{t_i}^2=\sigma_y^2+(1-\rho^2)\sigma_{s,1-t_i}^2\beta_{t_i,1-t_i}^2$.

    Then, the unnormalized log posterior for $\rho$ is 
    \begin{equation*}
            q_n(\rho)=-\frac{1}{2}\suml{i=1}^n\left(\log(\zeta_{t_i}-\psi_{t_i}^2)+\frac{(y_i''-\psi_{t_i}s_i''/\sigma_{s,t_i})^2}{\zeta_{t_i}-\psi_{t_i}^2}+s_i''^2/\sigma_{s,t_i}^2\right)+\log(f(\rho)),
    \end{equation*}
    where $f(\rho)$ is the prior for $\rho$.
    
    Let 
    \begin{equation*}
        l_i(\rho\mid t)=-\frac{1}{2}\left(\log(\zeta_{t}-\psi_{t}^2)+\frac{(y_i''-\psi_{t}s_i''/\sigma_{st})^2}{(\zeta_{t}-\psi_{t}^2)}+ s_i''^2/\sigma_{st}^2\right),\qquad t=0,1.
    \end{equation*}

    Then, $q_n(\rho) = \suml{i=1}^n(t_il_i(\rho\mid 1)+(1-t_i)l_i(\rho\mid 0))+\log(f(\rho))/n$. Let $n_0=\suml{i=1}^n(1-t_i)$ and $n_1=\suml{i=1}^nt_i$. By the positivity assumption, $n_0\to\infty$ and $n_1\to\infty$, as $n\to\infty$. Within the control group ($T_i=0$) or the treatment group ($T_i=1$), $(Y'_i,S'_i)$ are i.i.d. 

    Let $\hat{\rho}_0=\argmax_\rho(\suml{i=1}^n(1-t_i)l_i(\rho\mid 0)/n_0)$ and $\hat{\rho}_1=\argmax_\rho(\suml{i=1}^nt_il_i(\rho\mid 1)/n_1)$. Next, we show that $\hat{\rho}_0\to\rhot$ and $\hat{\rho}_1\to\rhot$ in probability as $n\to\infty$. By Proposition 7.1 in \cite{hayashi_2011}, we need to show that they satisfy the identification condition and the uniform convergence condition within both groups.

    For the control group,
    \begin{itemize}
        \item Identification condition: 
        
        The conditional mean of $l_i(\rho\mid 0)$ is
        \begin{equation*}
        \begin{split}
            E(l_i(\rho\mid 0))&=-\frac{1}{2}\left(\log(\zeta_0-\psi_0^2)+\frac{(1-p)(\psi_0^2-2\psi_0\psit_0+\zetat_0)}{\zeta_0-\psi_0^2}\right),
        \end{split}
        \end{equation*}

        Taking derivative w.r.t $\rho$, we have 
        \begin{equation*}
        \begin{split}
            \frac{\partial E(l_i(\rho\mid 0))}{\partial \rho}&=\frac{2\sigmasb^2\byab^2}{\zeta_{0}-\psi_{0}^2}\left(\frac{-\sigmasb^2\byab^2\rho^2}{\zeta_{0}-\psi_{0}^2}-1\right)(\rho-\rho^*).
        \end{split}
        \end{equation*}

        As long as $\byab\neq0$, we have $2\sigmasb^2\byab^2/(\zeta_{0}-\psi_{0}^2)(-\sigmasb^2\byab^2\rho^2/(\zeta_{0}-\psi_{0}^2)-1)<0$, where $\zeta_{0}-\psi_{0}^2 = \sigmay^2+(1-\rho^2)\byab^2\sib^2>0$. Thus, $E(l_i(\rho\mid0))$ is strictly increasing on $[-1,\rhot)$ and strictly decreasing on $(\rhot,1]$, indicating that $E(l_i(\rho\mid 0))$ is uniquely maximized on $[-1,1]$ at $\rhot$.

    \item Uniform convergence: 
    
    Note that $\Var\left(\suml{i=1}^n(1-t_i)l_i(\rho\mid 0)/n_0\right)=\Var(l_i(\rho\mid 0))/n_0=4/n_0$. By Chebyshev's inequality, uniform convergence holds.
        
    \end{itemize}

    For the treatment group, similarly, we can show that these also hold.

    Since $q_n({\rho})= \suml{i=1}^n\left(t_il_i(\rho\mid 1)/n_0(n_0/n)+(1-t_i)l_i(\rho\mid 0)/n_1(n_1/n)\right)+\log(f(\rho))/n$, the above provides intuition that $\hat{\rho}_B$ (posterior mode) converges to $\rhot$. To complete our proof rigorously, we will show that for any $\epsilon>0$, $P(|\hat{\rho}_B-\rhot| >\epsilon)\to0$ as $n\to\infty$. If $f(\rho)$ is continuous or bounded, then $\log(f(\rho))/n\to 0$ uniformly as $n\to\infty$. Since $\suml{t_i=t}l_i(\rho\mid t)/n_t$ uniformly converges to $E(l_i(\rho\mid t))$, we have $| q_n({\rho})-E(l_i(\rho\mid 0))(n_0/n)-E(l_i(\rho\mid 1))(n_1/n)| \to 0$ uniformly in probability, and $P(q_n(\rho)\geq q_n(\rhot):|\rho-\rhot| >\epsilon)\to 0$, that is, $P(|\hat{\rho}_B-\rhot| >\epsilon)\to0$.

\end{proof}

\subsubsection{Proof of Theorem \ref{thm:ApproximateVariance}}
\begin{proof}
    We inherit the notations from the proof of Theorem \ref{thm:IdentifyRho}. Throughout the proof, we always condition on $\bX$ and $T$, unless specified. In here we use the Laplace approximation to obtain the asymptotic approximation of posterior variance. Note that the Bernstein–von Mises Theorem provide the justifications for the Laplace approximation.

    \begin{equation*}
    \begin{split}
        E(\rho\mid \Dc)&\approx\frac{\int_{\rho^*-\delta_1}^{\rho^*+\delta_2} \rho \exp(q_n(\hat{\rho}_B)+\frac{1}{2}q_n''(\hat{\rho}_B)(\rho-\hat{\rho}_B)^2+o((\rho-\hat{\rho}_B)^2)d\rho}{\int_{\rho^*-\delta_1}^{\rho^*+\delta_2}  \exp(q_n(\hat{\rho}_B)+\frac{1}{2}q_n''(\hat{\rho}_B)(\rho-\hat{\rho}_B)^2+o((\rho-\hat{\rho}_B)^2)d\rho}\\
        &=\hat{\rho}_B,
    \end{split}
    \end{equation*}
    where $q_n''=\dfrac{\partial^2 q_n(\rho)}{\partial \rho^2}$.
    
    Similarly,
    \begin{equation*}
    \begin{split}
        E(\rho^2\mid \Dc)&\approx\frac{\int_{\rho^*-\delta_1}^{\rho^*+\delta_2} \rho^2 \exp(q_n(\hat{\rho}_B)+\frac{1}{2}q_n''(\hat{\rho}_B)(\rho-\hat{\rho}_B)^2+o((\rho-\hat{\rho}_B)^2)d\rho}{\int_{\rho^*-\delta_1}^{\rho^*+\delta_2}  \exp(q_n(\hat{\rho}_B)+\frac{1}{2}q_n''(\hat{\rho}_B)(\rho-\hat{\rho}_B)^2+o((\rho-\hat{\rho}_B)^2)d\rho}\\
        &=\hat{\rho}_B^2 - \frac{1}{q_n''(\hat{\rho}_B)}.
    \end{split}
    \end{equation*}

    Therefore, an approximation of the posterior variance is $-\frac{1}{q_n''(\hat{\rho}_B)}$. By the Strong Law of Large Numbers (SLLN), we have 
    \begin{equation*}
        \suml{t_i=t}l''_i(\rho\mid t)n_t^{-1}\to E(l''_i(\rho\mid t)), \text{ almost surely,}
    \end{equation*}
    where $l''_i(\rho\mid t)=\frac{\partial^2 l_i(\rho|t)}{\partial \rho^2}$.
    
    Since $q''_n(\rho)/n = \suml{t_i=0}(l''_i(\rho\mid 0)/n_0) (n_0/n)+\suml{t_i=1}(l''_i(\rho\mid 1)/n_1) (n_1/n)$, then an asymptomatic approximation of $q_n''(\rho)/n$ can be 
    \begin{equation*}
        \begin{split}
            q''_n(\rho)/n \approx E(l''_i(\rho\mid 0)) (1-\Bar{T})+E(l''_i(\rho\mid 1)) \Bar{T},
        \end{split}
    \end{equation*}
    where $\Bar{T}=\suml{i=1}^nT_i/n$.

    Note that there is only a constant difference (w.r.t $\rho$) between $l''_i(\rho\mid t)$ and the log-likelihood of $(Y'_t,S'_t)$. Applying the second Bartlett's Identity to it, we have
    \begin{equation}
    \begin{split}
        Var(\rho\mid \Dc)&\sim n^{-1}\left(\Bar{T}E\bigg( \Big(\frac{\partial l_i(\rho|1)}{\partial \rho} \Big)^2\bigg)+(1-\Bar{T})E\bigg( \Big(\frac{\partial l_i(\rho|0)}{\partial \rho} \Big)^2\bigg)\right)^{-1}\bigg| _{\rho=\hat{\rho}_B}\\
        &\sim n^{-1}\bigg(\Bar{T}\byba^2\sia^2\bigg(\frac{2\rho^2\byba^2\sia^2}{(\zeta_1-\psi_1^2)^2}+\frac{1}{\zeta_1-\psi_1^2}\bigg)+\\
        &\qquad (1-\Bar{T})\byab^2\sib^2\bigg(\frac{2\rho^2\byab^2\sib^2}{(\zeta_0-\psi_0^2)^2}+\frac{1}{\zeta_0-\psi_0^2}\bigg)\bigg)^{-1}\bigg| _{\rho=\hat{\rho}_B}.
    \end{split}
    \label{eq: asymptotic posterior variance}
    \end{equation}

    Since the right-hand side of (\ref{eq: asymptotic posterior variance}) is a continuous function of $\rho$, and $\hat{\rho}_B$ is consistent for $\rhot$, by the continuous mapping theorem, we can replace $\hat{\rho}_B$ with $\rhot$ asymptotically.
\end{proof}

For Proposition \ref{prop: PIR for byab and byba without constraints}, \ref{prop: PIR for byab and byba assuming same signs} and \ref{prop: PIR for byab and byba assuming dominant ob effect 1}, we present the proof considering covariates.

\subsubsection{Proof of Proposition \ref{prop: PIR for byab and byba without constraints}}
\begin{proof}
   We apply (\ref{eq: removing covariates}) and let $Y'_i=Y'_i(T_i)$ and $S'_i=S'_i(T_i)$. For the marginal parameters in (\ref{eq: marginal distribution after reparameterization}), we have
    \begin{equation}
    \begin{cases}
      &\mu_{y0}'= \lambda_0+\beta_{00}\phi_0+\beta_{01}\phi_1\\
      &\mu_{y1}'= \lambda_1+\beta_{10}\phi_0+\beta_{11}\phi_1\\
      &\zeta_0= \sigma_y^2+\bbeta_0^T\bSigma_{s}\bbeta_0\\
      &\zeta_1= \sigma_y^2+\bbeta_1^T\bSigma_{s}\bbeta_1\\
      &\psi_0=\sigma_{s0}\beta_{00}+\rho\sigma_{s1}\beta_{01}\\
      &\psi_1=\sigma_{s1}\beta_{11}+\rho\sigma_{s0}\beta_{10}
    \end{cases} 
    \label{eq: constraints for joint prameters}
    \end{equation}
    The general solutions of (\ref{eq: constraints for joint prameters}) is
    \begin{equation}
    \begin{cases}
      &\beta_{t,1-t}^2=(\zeta_t-\psi_t^2-\sigma_y^2)/(1-\rho^2)/\sigma^2_{s,1-t}\\
      &\beta_{tt}=(\psi_t-\rho\sigma_{s,1-t}\beta_{t,1-t})/\sigma_{st}\\
      &\lambda_t=\mu'_{yt}-(\beta_{t0}\phi_0+\beta_{t1}\phi_{1}).
    \end{cases} 
    \label{eq: solutions to constraints equations}
    \end{equation}
    In (\ref{eq: solutions to constraints equations}), except for signs, there is only one unknown parameter $\sigma_y^2$. Since $\beta_{t,1-t}^2\geq0$, (\ref{eq: solutions to constraints equations}) also implies that $0\leq\sigmay^2\leq \min\limits_{t=0,1} (\zeta_t-\psi_t^2)$. Plugging this into the first equation of (\ref{eq: solutions to constraints equations}), we have 
    \begin{equation*}
        \frac{\zeta_t-\psi_t^2-\min\limits_{t=0,1} (\zeta_t-\psi_t^2)}{(1-\rho^2)\sigma^2_{s,1-t}}\leq\beta_{t,1-t}^2\leq\frac{\zeta_t-\psi_t^2}{(1-\rho^2)\sigma^2_{s,1-t}}.
    \end{equation*}
    
    Since $\sib^2\byab^2\leq\sia^2\byba^2$ (which is equivalent to $\Var(Y(1)\mid S(1),\bd{X})\geq\Var(Y(0)\mid S(0),\bd{X})$), then it can be simplified as
    \begin{equation*}
        \begin{split}
            0\leq&\beta_{01}^2\leq\frac{\zeta_0-\psi_0^2}{(1-\rho^2)\sigma^2_{s1}},\\
            \frac{\zeta_1-\psi_1^2-(\zeta_0-\psi_0^2)}{(1-\rho^2)\sigma^2_{s0}}\leq&\beta_{10}^2\leq\frac{\zeta_1-\psi_1^2}{(1-\rho^2)\sigma^2_{s0}}.
        \end{split}
    \end{equation*}

    Note that (\ref{eq: solutions to constraints equations}) also requires that
    \begin{equation*}
        \sia^2\byba^2-\sib^2\byab^2=\frac{\zeta_1-\psi_1^2-(\zeta_0-\psi_0^2)}{1-\rho^2}.
    \end{equation*}

    We plug in $\Var(Y(t)\mid S(t),\bd{X})=\Var(Y(t)\mid S(t),\bd{X})=\zeta_t-\psi_t^2$ and $\Var(S(t)\mid\bd{X})=\sigma_{st}^2$ to obtain the desired forms. 
\end{proof}

\subsubsection{Derivation for (\ref{eq:PCE in terms of beta})}
\begin{equation*}
    \begin{split}
        PCE(\bd{u})&=E(PCE(\bd{u},\bd{x})\mid\bd{U}=\bd{u})\\
        &=E((\byba-\byaa)s_0+(\bybb-\byab)s_1+\lambda_1-\lambda_0\mid\bd{U}=\bd{u})\\
        &=(\byba-\byaa)s_0+(\bybb-\byab)s_1+\lambda_1-\lambda_0
    \end{split}
\end{equation*}

Using (\ref{eq: solutions to constraints equations}), we first replace $\lambda_t$ and then replace $\beta_{tt}$. After simplification, we have

\begin{equation*}
    \begin{split}
        PCE(\bd{u})&=(\byba-\frac{\psi_0}{\sia}+\frac{\sib}{\sia}\rho\byab)(s_0-\phi_0)+\\
        &\qquad (\frac{\psi_1}{\sib}-\frac{\sia}{\sib}\rho\byba-\byab)(s_1-\phi_1)+(\mu'_{y1}-\mu'_{y0}),
    \end{split}
\end{equation*}

\subsubsection{Proof of Proposition \ref{prop: PIR for byab and byba assuming same signs}}
\begin{proof}
    By directly applying Assumption \ref{assump: same sign} to Proposition \ref{prop: PIR for byab and byba without constraints}, we can obtain Proposition \ref{prop: PIR for byab and byba assuming same signs}.
\end{proof}

\subsubsection{Proof of Proposition \ref{prop: PIR for byab and byba assuming dominant ob effect 1}}
\begin{proof}
    Assumption \ref{assump: dominant ob effect} is equivalent to
    \begin{equation*}
    \begin{split}
        \frac{(\Cov(Y(t),S(1-t)\mid S(t),\boldsymbol{X}))^2}{\Var(S(1-t)\mid S(t),\boldsymbol{X})\Var(Y\mid S(t),\boldsymbol{X})}&\leq \frac{(\Cov(Y(t),S(t)\mid \boldsymbol{X}))^2}{\Var(Y\mid \boldsymbol{X})\Var(S(t)\mid \boldsymbol{X})}\\
        \frac{\beta_{t,1-t}^2\sigma^4_{s,1-t}(1-\rho^2)^2}{\sigma^2_{s,1-t}(1-\rho^2)(\sigma_y^2+\beta_{t,1-t}^2\sigma^2_{s,1-t}(1-\rho^2))}&\leq\frac{(\beta_{tt}\sigma_{st}+\rho\beta_{t,1-t}\sigma_{s,1-t})^2}{\sigma_{st}^2(\sigma_y^2+\bbeta_t^T\bSigmasi\bbeta_{t})}\\
    \end{split}
    \end{equation*}

    Using (\ref{eq: solutions to constraints equations}), we have 
    \begin{equation*}
    \begin{split}
        \frac{(\zeta_t-\psi_t^2)^2}{\zeta_t}&\leq\sigma_y^2.
    \end{split}
    \end{equation*}

    Combing it with the proof for Proposition \ref{prop: PIR for byab and byba without constraints}, the corresponding partial identification region can be obtained.
\end{proof}

Note that the above proof also shows that Assumption \ref{assump: dominant ob effect} is equivalent to 
    \begin{equation*}
        \dfrac{(\Var(Y(t)\mid S(t), \boldsymbol{X}))^2}{\Var(Y(t)\mid\boldsymbol{X})}\leq\Var(Y(t)\mid S(t),S(1-t),\boldsymbol{X}).
    \end{equation*}

\subsection{Proofs Relating to Model (\ref{model:BART})}
\subsubsection{Proof of Theorem \ref{thm:UnidentifiedInvariant}}
\begin{proof}

    Note that $Y'(t)$ and $S'(1-t)$ are also used in this proof, but they have different definitions than before.. Let $f_t(S(t),S(1-t),\boldsymbol{X})$ denote $\mu_{yt}(\bX,S(t))+\mu_{yct}(\bX,S(t),S(1-t))$. Then 
    \begin{equation*}
        f_t\big(S(t),M^{-1}(S(1-t)),\boldsymbol{X}\big)=\mu_{yt}\big(\bX,S(t)\big)+\mu_{yct}\big(\bX,S(t),M^{-1}(S(1-t))\big),
    \end{equation*}
    and 
    \begin{equation*}
        Y'(t)=f_t(S(t),M^{-1}(S(1-t)),\boldsymbol{X})+\epsilon.
    \end{equation*}
    
    Although $M$ could relate to $S(t)$ and $\boldsymbol{X}$, we do not include them in the $M$ for simplicity of notation. Let $S'(1-t)$ denote $M(S(1-t))$ and $p_{\epsilon}$ denote the pdf of $\epsilon$. Throughout, for simplicity of notation we let $p\big(Y'(t)=y)$ denote the pdf of $Y'(t)$ evaluated at $y$, with similar notation for $S(t)$.  The the conditional distribution of $Y'(t)$ given $S(t)$ and $\boldsymbol{X}$ is 
    \begin{equation*}
        \begin{split}
            p\big(Y'(t)=y\mid \boldsymbol{X},S(t)\big)&=\int p\big(Y'(t)=y\mid S(1-t)=s_{1-t},S(t),\boldsymbol{X}\big)
        dP_{S(1-t)\mid S(t),\boldsymbol{X}}\big(s_{1-t}\big) \\
        &=\int p_{\epsilon}\big(y-f_t(S(t),M^{-1}(S(1-t)),\boldsymbol{X})\big)dP_{S(1-t)\mid S(t),\boldsymbol{X}}\big(s_{1-t}\big)\\
        &=\int p_{\epsilon}\big(y-f_t(S(t),M^{-1}(s'_{1-t}),\boldsymbol{X})\big)dP_{S'(1-t)\mid S(t),\boldsymbol{X}}\big(s'_{1-t}\big)\\
        &=\int p_{\epsilon}\big(y-f_t(S(t),s_{1-t},\boldsymbol{X})\big)dP_{S(1-t)\mid S(t),\boldsymbol{X}}\big(s_{1-t}\big)\\
        &=p\big(Y(t)=y\mid \boldsymbol{X},S(t)\big).
        \end{split}
    \end{equation*}

    The third equality holds since $S'(1-t)\mid S(t),\boldsymbol{X}$ and $S(1-t)\mid S(t),\boldsymbol{X}$ have the same distribution and we also change the notation from $s_{1-t}$ to $s'_{1-t}$. The fourth equality holds since we apply the change of variable $S(1-t)=M^{-1}(S'_{1-t})$.

    The the conditional distribution of $(Y'(t),S(t))$ given $\boldsymbol{X}$ is 

    \begin{equation*}
        \begin{split}
        p\big(Y'(t)=y,S(t)=s_t\mid \boldsymbol{X}\big)&=p\big(Y'(t)=y\mid \boldsymbol{X},S(t)=s_t\big)p\big(S(t)=s_t\mid \bd{X}\big)\\
        &=p\big(Y(t)=y\mid \boldsymbol{X},S(t)=s_t\big)p\big(S(t)=s_t\mid \bd{X}\big)\\
        &=p\big(Y(t)=y,S(t)=s_t\mid \boldsymbol{X}\big).
        \end{split}
    \end{equation*}

    Therefore, $(Y'(t),S(t))\mid \bd{X}$ and $(Y(t),S(t))\mid \bd{X}$ follow the same distribution.

\end{proof}

\subsubsection{Proof of an equivalent expression for Assumption \ref{assump: generalized dominant ob effect}}

\begin{proof}
    Using the definition of Pearson's correlation ratio, Assumption \ref{assump: generalized dominant ob effect} is equivalent to:
    \begin{equation*}
        \begin{split}
            &\dfrac{\Var(E[Y(t)\mid \bd{U},\bX])-\Var(E[Y(t)\mid S(t),\bX])}{\Var(Y(t))-\Var(E[Y(t)\mid S(t),\bX])} \leq\\
            &\qquad \dfrac{\Var(E[Y(t)\mid S(t),\bX])-\Var(E[Y(t)\mid \bX])}{\Var(Y(t))-\Var(E[Y(t)\mid \bX])}.
        \end{split}
    \end{equation*}

    Since $\epsilon \ind E[Y(t)\mid \bd{U},\bX]$, then 
    \begin{equation*}
        \Var(Y(t)) = \Var(E[Y(t)\mid \bd{U},\bX]) + \sigmay^2.
    \end{equation*}

    Combining them, we have 
    \begin{equation*}
        \begin{split}
            &\dfrac{\Var(Y(t)) - \sigmay^2-\Var(E[Y(t)\mid S(t),\bX])}{\Var(Y(t))-\Var(E[Y(t)\mid S(t),\bX])} \leq\\
            &\qquad \dfrac{\Var(E[Y(t)\mid S(t),\bX])-\Var(E[Y(t)\mid \bX])}{\Var(Y(t))-\Var(E[Y(t)\mid \bX])}.
        \end{split}
    \end{equation*}

    We simplify the above, then we have
    \begin{equation*}
        \dfrac{(\Var(Y(t))-\Var(E[Y(t)\mid S(t),\bX]))^2}{\Var(Y(t))-\Var(E[Y(t)\mid \bX])}\leq \sigmay^2
    \end{equation*}
\end{proof}

\section{Details of MCMC Sampling}
 \label{appendix:DetailsOfMCMC}
    In this section, we detail the general Gibbs updating steps for Model (\ref{model:ContinuousWithCovariates}). Note that Model (\ref{model:ContinuousWithCovariates}) includes covariates. Throughout this section, let $\bd{\theta}_y=(\bbeta_0,\bbeta_1,\lambda_0,\lambda_1,\bgamma)$ and $\bd{\theta}_s=(\bphi,\balpha)$, both of which are column vectors. The full joint pdf of is:
\begin{equation}
    \begin{split}
        &\dfrac{1}{(2\pi)^{\frac{3n}{2}}\sigma_y^{n}\sia^n\sib^n(1-\rho^2)^{\frac{n}{2}}}\exp\bigg(-\frac{1}{2}\suml{i=1}^n\bigg((\btheta_y^T\bd{d}_i-y_i)^2/\sigma_y^2+\\
        &\qquad(\bd{E}_i^T\btheta_s-\bd{u}_i)^T\bSigmasi^{-1}(\bd{E}_i^T\btheta_s-\bd{u}_i)\bigg)\bigg),
    \end{split}
\end{equation}
 where 
    \begin{equation*}
        \bd{d}_i=\begin{pmatrix}
            \bd{u}_i(1-t_i)\\
            \bd{u}_it_i\\
            1-t_i\\
            t_i\\
            \bx_i
        \end{pmatrix}\ \text{and}\ \bd{E}_i=(\bd{e}_{i0},\bd{e}_{i1})=\begin{pmatrix}
            1&0\\
            0&1\\
            \bx_i&\bx_i\\
        \end{pmatrix}.
    \end{equation*}

We consider conjugate priors for all parameters except $\rho$ as follows: 

\begin{equation*}
    \begin{split}
        &\bbeta_t\sim N(\bmu_{\bbeta_t},\Sigma_{\bbeta_t}),\ 
        \lambda_t\sim N(\mu_{\lambda_t},\sigma_{\lambda_t}^2),\ 
        \phi_t\sim N(\mu_{\phi_t},\sigma_{\phi_t}^2),\ \bgamma\sim N(\bmu_{\bgamma},\Sigma_{\bgamma}),\\
        &\balpha\sim N(\bmu_{\balpha},\Sigma_{\balpha}),\ \sigma_y^2\sim IG(\eta_y,\nu_y),\ 
        \sigma_{s0}^2\sim IG(\eta_{s0},\nu_{s0}),\ 
        \sigma_{s1}^2\sim IG(\eta_{s1},\nu_{s1}),
    \end{split}
\end{equation*}
and a flat prior for $\rho$.

Then, the updating steps for Gibbs sampling are as follows
\begin{enumerate}[label=\alph*.]
    \item We can update the unobserved values of the intermediate $S(1-t_i)$ from
    \begin{equation*}
        S(1-t_i)\mid \cdot\sim N(\mu_{si,mis},\sigma_{si,mis}^2),
    \end{equation*}
    where 
    \begin{equation*}
        \sigma_{si,mis}^2=\left(\frac{\beta_{t_i,1-t_i}^2}{\sigma_y^2}+\frac{1}{\sigma_{s,1-t_i}^2(1-\rho^2)}\right)^{-1},
    \end{equation*}
    and 
    \begin{equation*}
        \begin{split}
            \mu_{si,mis}&=\left(\frac{\beta_{t_i,1-t_i}^2}{\sigma_y^2}+\frac{1}{\sigma_{s,1-t_i}^2(1-\rho^2)}\right)^{-1}\bigg(-(\beta_{t_i,t_i}s_{i}+\lambda_{t_i}+\bgamma^T\bx_i-y_i)\beta_{t_i,1-t_i}/\sigma_y^2+\\
            &\qquad (\phi_{1-t_i}+\balpha^T\bx_i)/\sigma_{s,1-t_i}^2/(1-\rho^2)+\rho(s_i-\phi_{t_i}-\balpha^T\bx_i)/\big(\sigmasa\sigmasb(1-\rho^2)\big)\bigg).
        \end{split}
    \end{equation*}

    \item
    We can update $\bd{\theta}_y$ from the following conditional distribution
    {\small\begin{equation*}
    \bd{\theta}_y\mid \cdot\sim N\left(\bigg(\suml{i=1}^n\bd{d}_{i}\bd{d}_i^T/\sigma_y^2+\Sigma_{\btheta_y}^{-1}\bigg)^{-1}\bigg(\suml{i=1}^ny_i\bd{d}_i/\sigma_y^2+\Sigma_{\btheta_y}^{-1}\bmu_{\btheta_y}\bigg),\bigg(\suml{i=1}^n\bd{d}_{i}\bd{d}_i^T/\sigma_y^2+\Sigma_{\btheta_y}^{-1}\bigg)^{-1}\right),
    \end{equation*}}
    where 
    \begin{equation*}
        \bSigma_{\btheta_y}=\text{diag}(\bSigma_{\bbeta_0},\bSigma_{\bbeta_1},\sigma^2_{\lambda_0},\sigma^2_{\lambda_1},\bSigma_{\bgamma}).
    \end{equation*}
    \item We update $\bd{\theta}_s$ from
    {\small\begin{equation*}
    \bd{\theta}_s\mid \cdot\sim N\left(\bigg(\suml{i=1}^n\bd{E}_{i}\bSigma_{s}^{-1}\bd{E}_i^T+\Sigma_{\btheta_s}^{-1}\bigg)^{-1}\bigg(\suml{i=1}^n\bd{E}_i\bSigma_{s}^{-1}\bd{u}_i+\Sigma_{\btheta_s}^{-1}\bmu_{\btheta_s}\bigg),\bigg(\suml{i=1}^n\bd{E}_{i}\bSigma_{s}^{-1}\bd{E}_i^T+\Sigma_{\btheta_s}^{-1}\bigg)^{-1}\right),
    \end{equation*}}
    where $\bSigma_{\btheta_s}=\text{diag}(\sigma_{\phi_0}^2,\sigma_{\phi_1}^2,\bSigma_{\balpha})$.
    
    \item We can update $\sigmay$ from
    \begin{equation*}
        \sigma_y^2\mid \cdot\sim IG\left(\frac{n}{2}+\etay,\suml{i=1}^n(\btheta_y^T\bd{d}_i-y_i)^2)/2+\nuy\right).
    \end{equation*}

    \item If we consider $\sigmasa\neq\sigmasb$, then we need to apply Metropolis-Hasting algorithm to update $\sigmasa^2$ and $\sigmasb^2$ from 
    \begin{enumerate}
        \item
    \begin{equation*}
        \begin{split}
            p(\sia^2\mid \cdot)&\propto \sia^{-n-2-2\etasa}\exp\bigg(-\frac{1}{2(1-\rho^2)\sib^2}\suml{i=1}^n\bigg((\btheta_{s}^T\beia-s_{i0})^2\sib^2/\sia^2-\\
            &\qquad 2\rho\sib(\btheta_{s}^T\beia-s_{i0})(\btheta_{s}^T\beib-s_{i1})/\sia\bigg)-\frac{\nusa}{\sigmasa^2}\bigg).
        \end{split}
    \end{equation*}
    \item
    \begin{equation*}
        \begin{split}
            p(\sib^2\mid \cdot)&\propto \sib^{-n-2-2\etasb}\exp\bigg(-\frac{1}{2(1-\rho^2)\sia^2}\suml{i=1}^n\bigg((\btheta_{s}^T\beib-s_{i1})^2\sia^2/\sib^2-\\
            &\qquad 2\rho\sia(\btheta_{s}^T\beia-s_{i0})(\btheta_{s}^T\beib-s_{i1})/\sib\bigg)-{\nusb}/{\sigmasb^2}\bigg).
        \end{split}
    \end{equation*}
    \end{enumerate}

    If we assume $\sia=\sib=\sigma_s$, then given a conjugate inverse gamma prior, $\sigma_s^2\sim IG(\eta_s,\nu_s)$, we can update $\sigma_s^2$ using the following conditional distribution:
    \begin{equation*}
        \begin{split}
            \sigma_s^2\mid IG\bigg(n+\eta_s,\suml{i=1}^n(\bd{E}_i^T\btheta_s-\bd{u}_i)^T\bd{R}_s^{-1}(\bd{E}_i^T\btheta_s-\bd{u}_i)/2+\nu_s\bigg),
        \end{split}
    \end{equation*}
    where 
    \begin{equation*}
        \bd{R}_s=\begin{pmatrix}
            1&\ \rho\\
            \rho&\ 1
        \end{pmatrix}.
    \end{equation*}
    
    \item
    Instead of updating $\rho$ from posterior of full data including unobserved intermediates, we update $\rho$ using marginal posterior with observed intermediate variables only. Then, we have
        \begin{equation*}
        \begin{split}
         p(\rho\mid \cdot)
        &\propto \prod_{i=1}^n\bigg((\zeta_{t_i}-\psi_{t_i}^2)^{-1/2}\exp\bigg(-\frac{1}{2}\suml{i=1}^n\frac{(\psi_{t_i}s''_i/\sigma_{st}-y''_i)^2}{\zeta_{t_i}-\psi_{t_i}^2}\bigg)\bigg),
        \end{split}
        \end{equation*}
    where given $T_i=t$, $y''_i=y_i-\lambda_t-\bgamma^T\bx_i-\bmu_{si}^T\bbeta_t$ and $s''_i=s_i-\phi_t-\balpha^T\bx_i$. Since the conditional distribution of $\rho$ is intractable, we apply a Metropolis-Hastings algorithm to update $\rho$.
\end{enumerate}

If there are no covariates, one can simply remove $\bx_i$ and corresponding parameters $(\balpha,\bgamma)$. For truncated priors, one can modify the above Gibbs updating steps with truncated versions of the corresponding distributions.

\section{Results for Binary Intermediates}
\label{appendix:BinaryIntermediates}

\newcommand{\piaa}{p_{i00}}
\newcommand{\piab}{p_{i01}}
\newcommand{\piba}{p_{i10}}
\newcommand{\pibb}{p_{i11}}
\newcommand{\pa}{p_{i0}}
\newcommand{\pb}{p_{i1}}
\newcommand{\uia}{s_{i0}}
\newcommand{\uib}{s_{i1}}


Consider the following model:
\begin{equation}
    \begin{split}
        Y_{i}|\bd{U_i},\bX_i,T_i=t&\sim N(\mu_{yit},\sigma_y^2),\qquad t=0,1.\\
        p_{i,\uia,\uib}=P(\bd{U}_i=(\uia,\uib)|\bX_i)&=\piaa^{(1-\uia)(1-\uib)}\piab^{(1-\uia)\uib}\piba^{\uia(1-\uib)}\pibb^{\uia\uib},
        \label{model:Binary}
    \end{split}
\end{equation}
where $(\piaa,\piab,\piba,\pibb)$ are functions of $X_i$. Because the marginal probabilities of the two-way contingency table are identifiable, knowing any one of $(\piaa,\piab,\piba,\pibb)$ determines the others, and therefore we treat $\pibb$ as the only unknown parameter. We let $\pibb=f(\bd{x_i};\bd{\zeta},\rho)$, where $\bd{\zeta}$ denotes parameters that can be identified marginally, and $\rho$ denotes an (association) parameter that cannot be identified marginally. For example, we can let $\rho$ be the correlation parameter between  $S_{0}$ and $S_{1}$. 

\newcommand{\paa}{p_{00}}
\newcommand{\pab}{p_{01}}
\newcommand{\pba}{p_{10}}
\newcommand{\pbb}{p_{11}}
\newcommand{\psa}{p_{1\cdot}}
\newcommand{\psb}{p_{\cdot1}}

For simplicity of discussion, we consider the scenario where there are no covariates for the remainder of this section.  Since identification of $\pbb$ is equivalent to identification of $\rho$, we focus on identification of $\pbb$ (note that for simplicity we will use $\pbb$ and $\rho$ exchangeably in the rest of this section.). By knowing the marginal probabilities, we have
\begin{equation*}
    \begin{split}
        \pba&=\psa-\pbb,\\
        \pab&=\psb-\pbb,\\
        \paa&=1-\psa-\psb+\pbb,
    \end{split}
\end{equation*}
where $\psa=P(S(0)=1)$ and $\psb=P(S(1)=1)$.
Thus, assuming a flat prior, the unnormalized posterior of $\rho$ is 
\begin{equation*}
    \begin{split}
        P(\rho \vert \mathcal{D})&\propto\prod_{i=1}^n(\suml{s_{i,1-t_i}=0,1} \exp(-\frac{(y-\lambda_{t_i}-\bbeta_{t_i}^T\bd{u})^2}{2\sigmay^{2}})\paa^{(1-\uia)(1-\uib)}\pab^{(1-\uia)\uib}\pba^{\uia(1-\uib)}\\
        &\qquad \pbb^{\uia\uib}).
    \end{split}
\end{equation*}

\subsection{Identification of the Association Between Principal Strata}
\begin{proposition}
    Suppose that $\btheta=(\bbeta_0,\bbeta_1,\lambda_0,\lambda_1,\sigma_y^2,\psa,\psb)$ are known and principal ignorability fails: that is, at least one of $\beta_{01}$ or $\beta_{10}$ is nonzero. Under model (\ref{model:Binary}), the posterior mode of $\rho \mid \mathcal{D}$ is consistent.
\end{proposition}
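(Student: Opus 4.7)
The plan is to mirror the proof of Theorem \ref{thm:IdentifyRho} in the binary intermediate setting, replacing the explicit Gaussian calculations with a mixture-of-normals argument. Since $(\beta_0, \beta_1, \lambda_0, \lambda_1, \sigma_y^2, p_{1\cdot}, p_{\cdot 1})$ are known, the only unknown parameter is $p_{11}$ (equivalently $\rho$), and it is constrained to lie in the compact Fréchet-type interval $[\max(0, p_{1\cdot} + p_{\cdot 1} - 1), \min(p_{1\cdot}, p_{\cdot 1})]$. I would first write the joint density of $(Y_i, S_i) \mid T_i = t$ by marginalizing out the unobserved $S_i(1-t)$, yielding, for each $t$, a two-component normal mixture
\begin{equation*}
p(y, s \mid T = t; p_{11}) = p_{s \cdot 1}^{(t)} \, \phi\bigl(y; \mu^{(t)}_{s,0}, \sigma_y^2\bigr) + p_{s \cdot 0}^{(t)} \, \phi\bigl(y; \mu^{(t)}_{s,1}, \sigma_y^2\bigr),
\end{equation*}
where the two means $\mu^{(t)}_{s,0}, \mu^{(t)}_{s,1}$ differ by $\beta_{t,1-t}$ and the weights are conditional probabilities of the unobserved stratum, each an affine function of $p_{11}$ with known marginals.

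The proof would then proceed in three steps. First (identification), I would show the population criterion
$\ell(p_{11}) = E[\log p(Y, S \mid T; p_{11})]$, decomposed across $T = 0, 1$, is uniquely maximized at the true $p_{11}^\ast$. By the standard information inequality, equality $\ell(p_{11}) = \ell(p_{11}^\ast)$ forces the two mixtures to coincide almost surely. When $\beta_{t, 1-t} \neq 0$, the two Gaussian components within group $T = t$ have distinct means and equal variances, so they are linearly independent as densities and the mixture weights are uniquely determined; thus $p_{11}$ is pinned down from the conditional law of $Y \mid (S, T=t)$ whenever the stratum weights depend non-trivially on $p_{11}$, which holds by the positivity of $p_{1\cdot}, p_{\cdot 1}$. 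Second (uniform convergence), since $p_{11}$ lies in a compact interval, the log-density $\log p(y, s \mid T=t; p_{11})$ is uniformly continuous in $p_{11}$ with a square-integrable envelope (it is bounded above by a quadratic in $y$ with coefficients independent of $p_{11}$), so a standard uniform law of large numbers applies within each treatment group, as in the proof of Theorem \ref{thm:IdentifyRho}. The posterior of $p_{11}$ concentrates asymptotically on $p_{11}^\ast$ by the same Laplace/Bernstein–von Mises argument used there, yielding consistency of the posterior mode.

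The main obstacle is the identification step, and it splits cleanly according to which of $\beta_{01}, \beta_{10}$ is nonzero. If both are nonzero, either group identifies $p_{11}$. If only $\beta_{01} \neq 0$, the $T=0$ group gives identification via its two-component mixture while the $T=1$ group contributes a degenerate one-component likelihood that is flat in $p_{11}$; the overall likelihood still has a unique maximum at $p_{11}^\ast$ because the positivity assumption guarantees a positive fraction of $T=0$ observations in the limit. The remaining technicality is ruling out coincidental equality of mixtures at distinct $\rho$ when the two values $s=0$ and $s=1$ give conflicting weight maps. This is handled by noting that $p(S(1)=1 \mid S(0)=s_0)$ depends injectively on $p_{11}$ for each fixed $s_0$ (strictly monotone in $p_{11}$), so distinct $p_{11}$ necessarily produce distinct mixture weights, and hence distinct densities, for every $s_0$ with positive mass. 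Once identification is established, the rest of the argument transfers verbatim from Appendix \ref{appendix:ProofOfThm1}.
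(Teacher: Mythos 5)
Your proposal is correct and follows essentially the same route as the paper's proof: marginalize the unobserved intermediate to obtain a two-component Gaussian mixture for $(Y,S)\mid T=t$, establish identification via the information (Jensen) inequality by noting that a nonzero $\beta_{t,1-t}$ separates the mixture components so that equality of densities forces $p_{11}=p_{11}^\ast$, verify uniform convergence over the compact Fr\'echet interval, and conclude by the same argument as Theorem \ref{thm:IdentifyRho}. The only differences are cosmetic — you argue identifiability of the weights via linear independence of equal-variance Gaussians with distinct means and use an envelope/ULLN for uniform convergence, whereas the paper argues via the a.s.\ non-degeneracy of the likelihood ratio and bounds $E\log^2 f_t$ directly — but these yield the same conclusion.
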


\begin{proof}
    Let $f_t(y_i,s_i)$ denote the conditional marginal pdf of $(Y_i,S_i)|\rho,T_i=t$, for $t=0,1$. For simplicity, at times we simply refer to this as $f_t$. Let $h_n(\rho)$ denote the log normalized posterior, defined as $h_n(\rho)=\suml{i=1}^n\log
(f_{t_i}(y_i,s_i|\rho))$, and $q_n(\rho)=h_n(\rho)/n$. Let $q_0(\rho)=P(T_i=1)E(\log(f_1))+P(T_i=0)E(\log(f_0))$.

The conditional marginal pdf of $(Y_i,S_i)|\rho,T_i=t$, for $t=0$, is
\begin{equation}
    \begin{split}
        f_0(y,s|\rho) &= \dfrac{1}{\sqrt{2\pi}\sigmay} \exp(-\dfrac{(y-\lambda_0-\byaa s)^2}{2\sigmay^2})(\paa^{1-s}\pba^{s}+\\
        &\qquad\exp(-\dfrac{\byab^2-2\byab(y-\lambda_0-\byaa s)}{2\sigmay^2})\pab^{1-s}\pbb^{s}),
        \label{eq:BinaryMarginalPDF0}
    \end{split}
\end{equation}
and, for $t=1$, is 
\begin{equation*}
    \begin{split}
        f_1(y,s|\rho) &= \dfrac{1}{\sqrt{2\pi}\sigmay} \exp(-\dfrac{(y-\lambda_1-\bybb s)^2}{2\sigmay^2})(\paa^{1-s}\pab^{s}+\\
        &\qquad\exp(-\dfrac{\byba^2-2\byba(y-\lambda_1-\bybb s)}{2\sigmay^2})\pba^{1-s}\pbb^{s}),
    \end{split}
\end{equation*}
where $\rho=\pbb$. Similar to Appendix \ref{appendix:ProofOfThm1}, we only need to check identification for $q_0$ and uniform convergence for $q_n$.

\begin{itemize}
    \item Identification condition:  
    
    Since $\log(\cdot)$ is a strictly concave function, by Jensen's inequality we have
\begin{equation*}
    \begin{split}
        E_{\rho^\ast}[\log(\frac{f_0(Y,S|\rho)}{f_0(Y,S|\rho^\ast)})|T_i=0]&\leq \log(E_{\rho^\ast}[\frac{f_0(Y,S|\rho)}{f_0(Y,S|\rho^\ast)}|T_i=0])=0\\E_{\rho^\ast}[\log(f_0(Y,S|\rho))|T_i=0]&\leq E_{\rho^\ast}[\log(f_0(Y,S|\rho^\ast))|T_i=0],
    \end{split}
\end{equation*}
where the equality holds if and only if $\rho$ satisfies $P(f_0(Y,S|\rho)=f_0(Y,S|\rho^\ast))=1$ or $\byab=0$. When $\byab\neq0$, since $P(\exp(-\dfrac{\byab^2-2\byab(Y-\lambda_0-\byaa S)}{2\sigmay^2}) \neq 1)=1$, then $P(f_0(Y,S|\rho)=f_0(Y,S|\rho^\ast))=1$ is equivalent to $\rho=\rho^\ast$.

Similarly, 
\begin{equation*}
    E_{\rho^\ast}[\log(f_1(Y,S|\rho))|T_i=1]\leq E_{\rho^\ast}[\log(f_1(Y,S|\rho^\ast))||T_i=1],
\end{equation*}
where the equality holds if and only if $\rho=\rho^\ast$ or $\byba=0$.

Since at least one of $\byab$ and $\byba$ is not zero, then $q_0(\rho)$ is uniquely maximized on $\rho\in[\max(0,\psa+\psb-1), \min(\psa,\psb)]$ at $\rho^\ast$.

\item Uniform convergence:

Following in the same spirit as the proof for continuous intermediates in Appendix \ref{appendix:ProofOfThm1}, we only need to show that $\Var(\log(f_{T_i}(Y_i,S_i|\rho)|T_i=t)=\Var(\log f_t(Y(0),S(0)))$ and $E(\log(f_{T_i}(Y_i,S_i|\rho)|T_i=t)=E(\log f_t(Y(0),S(0)))$ is bounded w.r.t $\rho$ ($\pbb$) for $t=0,1$. It is sufficient for us to show that  $E\log^2f_t$ is bounded w.r.t $\rho$ ($\pbb$) for $t=0,1$.

For $t=0$, we have 
\begin{equation}
    \begin{split}
        \log(f_0)&=-\log(\sqrt{2\pi}\sigmay)-
        \log\Big(\exp\left(-\frac{(Y-\lambda_0-S\byaa)^2}{2\sigmay^2}\right) \paa^{1-S}\pba^{S}+\\
        &\qquad\exp\left(-\frac{(Y-\lambda_0-S\byaa-\byab)^2}{2\sigmay^2}\right)
        \pab^{1-S}\pbb^{S}\Big),
        \label{eq:binary:log marginal}
    \end{split}
\end{equation}
where conditional on $T=t$, $S=S(0)$ and $Y=Y(0)$.

The first term is constant, and we only need to show that the second moment of the second term is bounded w.r.t $\pbb$. For simplicity, we denote the expression inside the logarithm of the second term by $A+B$. Since $A+B<1$ (due to $\paa^{1-S}\pba^{S} + \pab^{1-S}\pbb^{S} \leq 1$ and $\exp(-(\cdot)^2)\leq 1$), $|log(A+B)|\leq |\log(2 \min(A,B))| \leq |\log(2 A)| + |\log(2 B)|$. Then, we only need to show the second moments of $|\log(A)$ and $|\log(B)|$ is bounded w.r.t $\pbb$.

The second moment of $|\log(A)|$ is 
\[
\begin{split}
    &E\left(-\frac{(Y(0)-\lambda_0-S(0)\byaa)^2}{2\sigmay^2}+(1-S(0))\paa+S(0)\pba\right)^2\leq \\
    &\qquad (1+1+1)\left(E\left(\frac{(Y-\lambda_0-S(0)\byaa)^2}{2\sigmay^2}\right)^2 + E(1-S(0))^2\paa^2+ ES^2(0)\pba^2\right).
\end{split}
\]

Thus, we only need to show that $E(Y^4(0))$ is bounded w.r.t $\pbb$. Similarly, for $|\log(A)|$, it is also sufficient to show that. Now, for $E(Y^4(0))$, we have 
\[
E(Y^4(0))=E\left(3 \sigmay^4 + \left(\lambda_0 + \byaa S(0) + \byab S(1)\right)^4\right),
\]
which is bounded w.r.t $\pbb\in[\max(0,\psa+\psb-1), \min(\psa,\psb)]$.

Similarly, we have the same results for $t=1$. Therefore, uniform convergence holds.
\end{itemize}
\end{proof}
Unlike for continuous intermediates, the marginal distribution is non-standard, and therefore it is infeasible to derive the closed form expression of the asymptotic approximation of the posterior variance.

\subsection{Identification up to Sign When Principal Ignorability Fails}
\label{appendix:BinaryPCEIdentificationIgnoringSigns}
For the continuous intermediates setting, we showed that even under a known association parameter, $(\byab,\byba)$ are unidentifiable and thus PCEs are also unidentifiable. These parameters are partially identified and we derived the partial identification region explicitly. In this section, we show that these same results do not hold for binary intermediates. For binary intermediates we will show that $(\byab,\byba)$ are actually point identified, but only up to sign. In other words, the magnitude is identified, though the signs of the coefficients are not identified and must be reasoned through using prior expertise. 

The marginal mean and variance (covariance) parameters of the observed outcome and intermediate can be written in terms of the joint model parameters, which was derived in (\ref{eq: constraints for joint prameters}), and is distribution-free and therefore also holds for binary intermediates. As a reminder, the general solutions of (\ref{eq: constraints for joint prameters}) are
    \begin{equation*}
    \begin{cases}
      &\beta_{t,1-t}^2=\dfrac{\zeta_t-\psi_t^2-\sigma_y^2}{\left(1-\Cor^2(S(0),S(1))\right)\sigma^2_{s,1-t}}\\
      &\beta_{tt}=\dfrac{\left(\psi_t-\Cor(S(0),S(1))\sigma_{s,1-t}\beta_{t,1-t}\right)}{\sigma_{st}}\\
      &\lambda_t=\mu_{yt}-(\beta_{t0}\phi_0+\beta_{t1}\phi_{1}),
    \end{cases} 
    \end{equation*}
where $(\mu_{yt},\phi_t)$ are the marginal means of $(Y(t), S(t))$ and 
\begin{equation*}
    \begin{pmatrix}
        \zeta_t&\psi_t\sigma_{st}\\
        \psi_t\sigma_{st}&\sigma_{st}^2
    \end{pmatrix}
\end{equation*}
is the marginal covariance matrix of $(Y(t), S(t))$ ($t=0,1$). These marginal parameters are identified as all mean parameters can be estimated using sample means, and variance parameters can be estimated using their sample-level analogs as well. 

Now, all remaining unknown parameters (except for the signs of ($\byab,\byba$)) can be expressed in terms of $\sigmay$ and the marginal parameters described above. We plug the solutions into conditional marginal pdf of $(Y,S)| T=0$ and obtain $f_0(y,s|\sigmay^2)$ (this can also be done with $f_1(y,s|\sigmay^2)=f_1(y,s|(\sigmay^\ast)^2)$). Next, we only need to show that $f_0(y,s|\sigmay^2)=f_0(y,s|(\sigmay^\ast)^2)$, for any $y\in\Rs$ and $s\in\{0,1\}$, implies that $\sigmay=\sigmay^\ast$. 

Suppose, for contradiction, that there is $\sigma_{y0}\in\Rs^+$ such that $f_0(y,s|\sigma_{y0}^2)=f_0(y,s|(\sigmay^\ast)^2)$, for any $y\in\Rs$ and $s\in\{0,1\}$. Note that 
\begin{equation}
    \begin{split}
    \dfrac{f_0(y,s|\sigmay^2)}{f_0(y,s|(\sigmay^\ast)^2)}&=\dfrac{\sigmay^\ast}{\sigmay} \exp\left(\dfrac{(y-\lambda_0^\ast-\byaa^\ast s)^2}{2{\sigmay^\ast}^2}-\dfrac{(y-\lambda_0-\byaa s)^2}{2\sigmay^2}\right)\\
    &\qquad \dfrac{1+\exp(-\dfrac{\byab^2-2\byab(y-\lambda_0-\byaa s)}{2\sigmay^2})\pab^{1-s}\pbb^{s}/(\paa^{1-s}\pba^{s})}{1+\exp(-\dfrac{(\byab^\ast)^2-2\byab^\ast(y-\lambda_0^\ast-\byaa^\ast s)}{2\sigmay^2})\pab^{1-s}\pbb^{s}/(\paa^{1-s}\pba^{s})}.
\end{split}
\label{eq:BinaryIdentificationEq}
\end{equation}

We first consider that $\byab<0$. Note that  
\[
\lim_{y\to\infty}\exp(-\dfrac{\byab^2-2\byab(y-\lambda_0-\byaa s)}{2\sigmay^2})\pab^{1-s}\pbb^{s}/(\paa^{1-s}\pba^{s})\to 0,
\]
for any $\sigmay\in\Rs^+$. For any $\delta>0$ and $\sigmay\in\{\sigma_{y0},\sigmay^\ast\}$, there is $y_0\in \Rs$, such that for $y>y_0$, we have
\[
\exp(-\dfrac{\byab^2-2\byab(y-\lambda_0-\byaa s)}{2\sigmay^2})\pab^{1-s}\pbb^{s}/(\paa^{1-s}\pba^{s})<\delta
\]
It follows that the second term of $f_0(y,s|\sigma_{y0}^2)/f_0(y,s|(\sigmay^\ast)^2)$, in (\ref{eq:BinaryIdentificationEq}), is bounded by $1/(1+\delta)$ and $1+\delta$ for $y>y_0$. Note that for the first term in (\ref{eq:BinaryIdentificationEq}), since $\sigma_{y0}\neq\sigmay^\ast$, then 
\[
\lim_{y\to\infty} \dfrac{\sigmay^\ast}{\sigmay} \exp\left(\dfrac{(y-\lambda_0^\ast-\byaa^\ast s)^2}{2(\sigmay^\ast)^2}-\dfrac{(y-\lambda_0-\byaa s)^2}{2\sigma_{y0}^2}\right)=0.
\]
For any $\delta>0$, there is $y_1\in\Rs$, such that for $y>y_1$, 
\[
\dfrac{\sigmay^\ast}{\sigmay} \exp\left(\dfrac{(y-\lambda_0^\ast-\byaa^\ast s)^2}{2(\sigmay^\ast)^2}-\dfrac{(y-\lambda_0-\byaa s)^2}{2\sigma_{y0}^2}\right)<\delta.
\]

Therefore, considering $\delta=1/2$, for $y>\max(y_0,y_1)$,
\[
0<f_0(y,s|\sigma_{y0}^2)/f_0(y,s|(\sigmay^\ast)^2)<\delta (1+\delta)=3/4<1,
\]
which contradicts the fact that $f_0(y,s|\sigma_{y0}^2)=f_0(y,s|(\sigmay^\ast)^2)$, for any $y\in\Rs$ and $s\in\{0,1\}$. This indicates that $\sigmay=\sigmay^\ast$ and $\sigmay$ is identifiable. Moreover, $\byab$ and $\byba$ are also identifiable (except for the signs). The reason that the signs are not identified or are weakly identified can be seen from the first equation in (\ref{eq: constraints for joint prameters}), where knowing $\sigma_y$ only identifies $\beta_{t,1-t}^2$.

\subsection{Simulations with binary intermediate}
Consider the following data generating process:
\begin{equation*}
    \begin{split}
        Y_i(0)\mid \bd{U}_i&\sim N(0.9+1.2S(0)+0.6S(1),0.5^2),\\
        Y_i(1)\mid \bd{U}_i&\sim N(0.5+0.8S(0)+1.2S(1),0.5^2),\\
        P(\bd{U}_i=(\uia,\uib))&= 0.1^{(1-\uia)(1-\uib)}0.3^{(1-\uia)\uib}0.2^{\uia(1-\uib)}0.4^{\uia\uib}.
    \end{split}
\end{equation*}

We generate one large dataset with $n = 5,000$ to investigate the above identification results for $\byab$ and $\byba$. For this dataset, we treat $\pbb$ as known and run MCMC both with and without a constraint that these coefficients are necessarily positive. Gibbs sampling is used, with a single MCMC chain run for a total of 8,000 iterations, where the first 2,000 are burn-in, and a thinning interval of 20 is applied. The details of MCMC sampling are included in the following subsection. We consider noninformative conjugate priors as follows:
\begin{equation*}
    \begin{split}
        &\bbeta_t\sim N(0,10^5\bd{I}_2),\ 
        \lambda_t\sim N(0,10^5),\ \sigma_y^2\sim IG(10^{-3},10^{-3}),\ \psa\sim U(0,1),\ \psb\sim U(0,1),
    \end{split}
\end{equation*}
where $IG(a,b)$ represents the inverse gamma distribution and $\bd{I}_2$ is a $2\times2$ identity matrix. 

The trace plots of $\byab$ and $\byba$, both with and without the positive sign constraint, are shown in Figures \ref{fig: traceplots of beta01 sign} and \ref{fig: traceplots of beta10 sign}, respectively. Without the sign constraint, the trace plots of $\byab$ and $\byba$ tend to explore different signs, and the trace plot of $\byba$ bounces around the negative of the true value. This highlights our result that these parameters are only identified up to sign, and therefore without a sign constraint, the MCMC algorithm bounces between different values that are equally supported by the data. After applying the positive (true) sign constraint, both trace plots hover around the true values, confirming the identification results for $\byab$ and $\byba$ discussed in Appendix \ref{appendix:BinaryPCEIdentificationIgnoringSigns}. 

\begin{figure}[tb]
    \centering
    \includegraphics[width=0.75\linewidth]{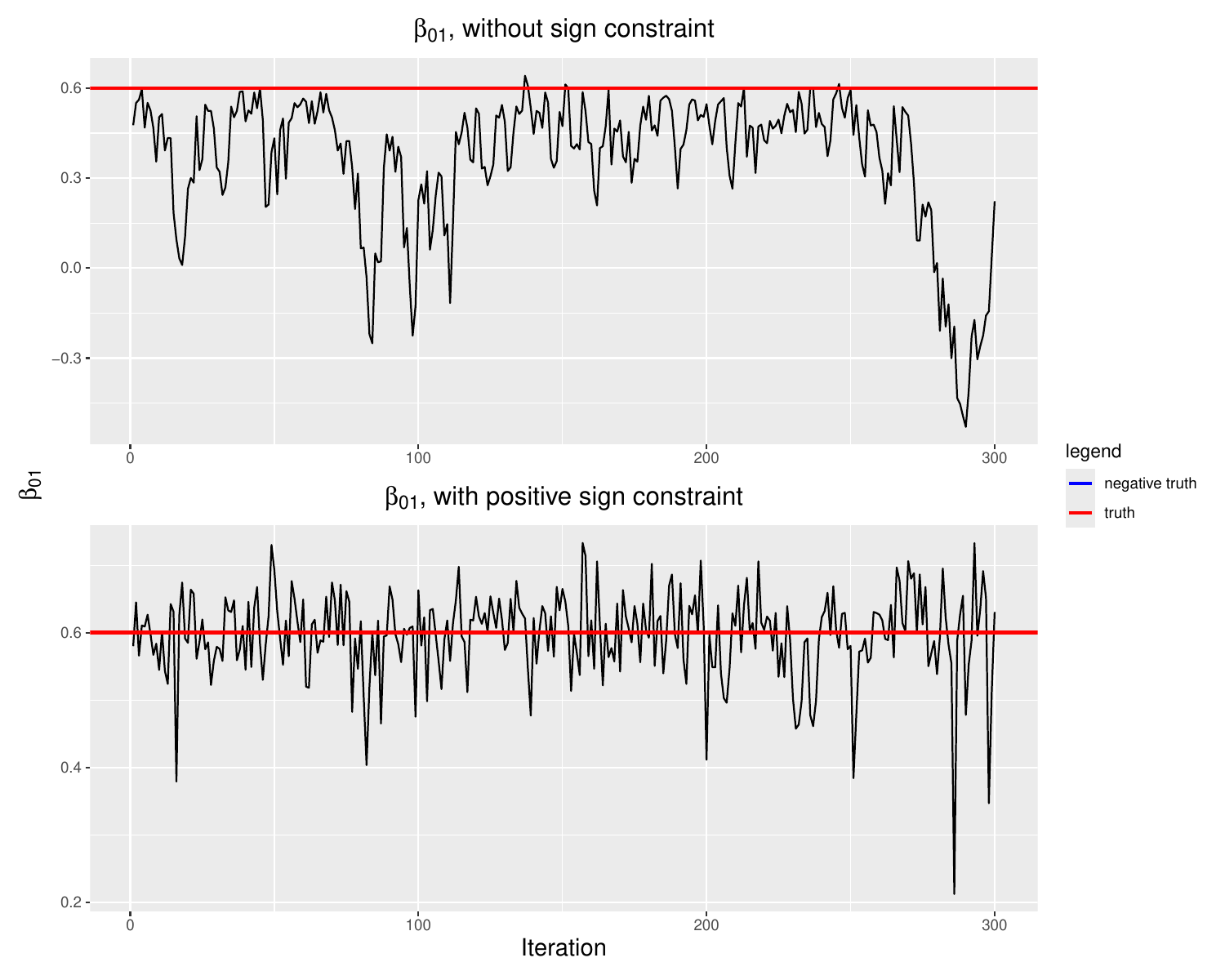}
    \caption{Trace plots of $\byab$ with and without the positive sign constraint.}
    \label{fig: traceplots of beta01 sign}
\end{figure}

\begin{figure}[tb]
    \centering
    \includegraphics[width=0.75\linewidth]{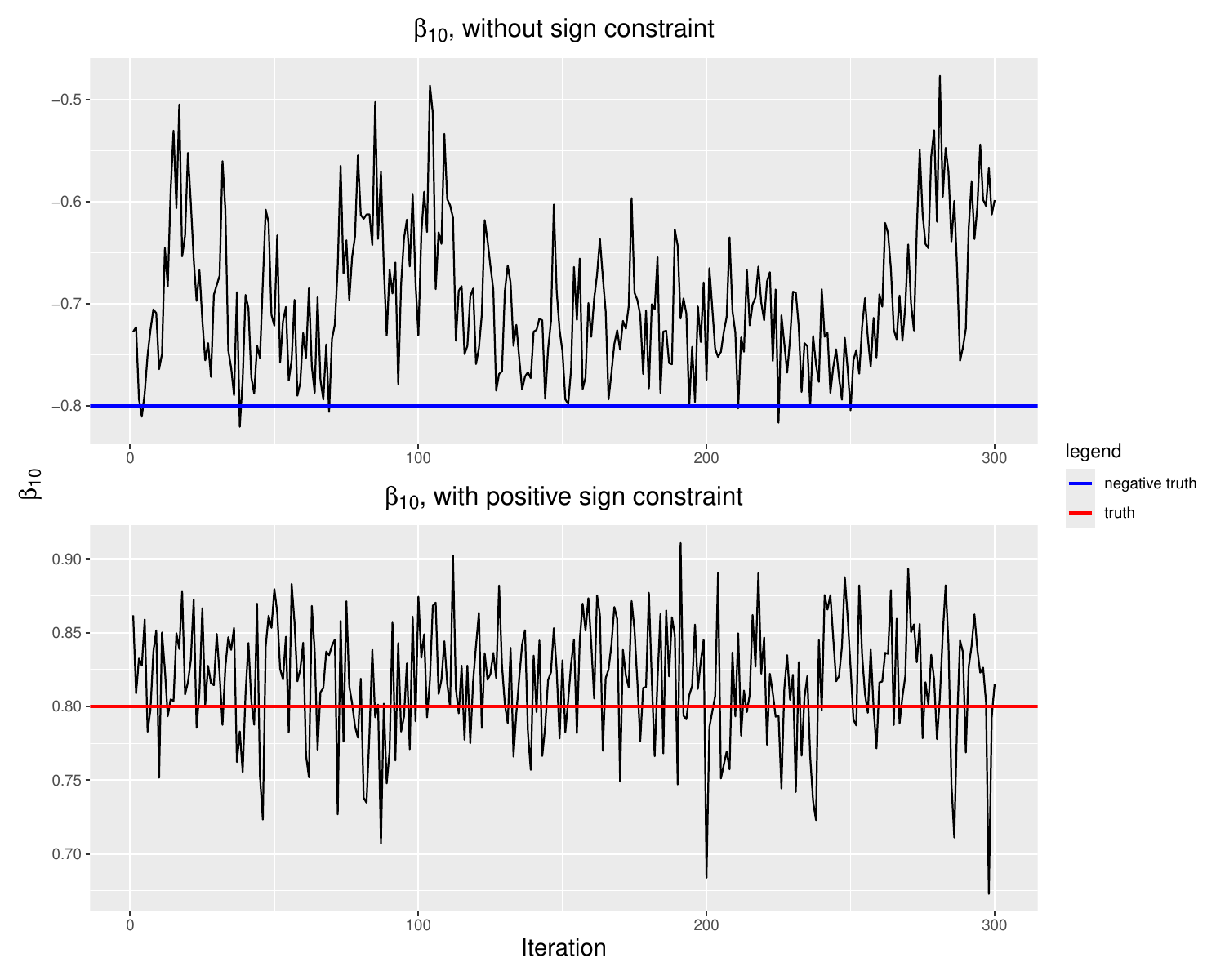}
    \caption{Trace plots of $\byba$ with and without the positive sign constraint.}
    \label{fig: traceplots of beta10 sign}
\end{figure}

Additionally, We also generate one large dataset with $n = 25,000$ to investigate whether $\pbb$ is also identified. We again fit the models both with
and without the positive sign constraint, and use the same MCMC configuration as in the previous simulated dataset. However, we now treat $\pbb$ as unknown and place a flat prior distribution for this parameter. The trace plots of $\pbb$ with and without the positive sign constraint are shown in Figure \ref{fig: traceplots of p11 identification}. Both trace plots of $\pbb$ hover around the true value, suggesting that $\pbb$ is identified as well when the outcome model parameters are identified. It is worth noting, however, that all results in this subsection hold in the absence of covariates and further research is needed to determine whether they hold more generally.

\begin{figure}[tb]
    \centering
    \includegraphics[width=0.75\linewidth]{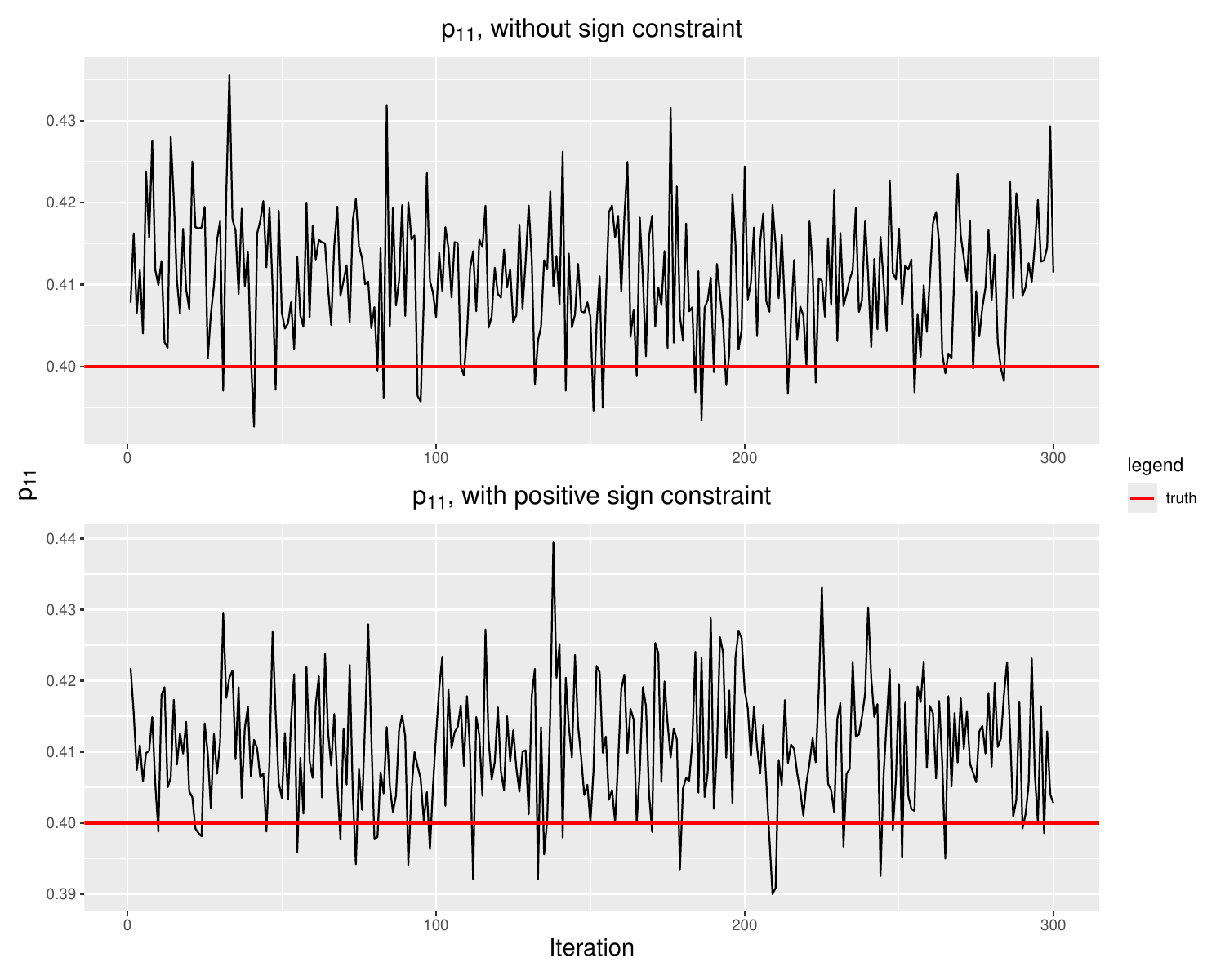}
    \caption{Trace plots of $\pbb$ with and without the positive sign constraint.}
    \label{fig: traceplots of p11 identification}
\end{figure}

\subsection{Details of MCMC Sampling for binary intermediates}
In this subsection, let $\btheta_y=(\bbeta_0,\bbeta_1,\lambda_0,\lambda_1)$, which is column vector. The full joint pdf of is: 
\[
\begin{split}
    \dfrac{1}{(2\pi)^{\frac{n}{2}}\sigmay^n}\exp\left(-\suml{i=1}^n\frac{(y_i-\btheta_y^T\bd{d}_i)^2}{2\sigmay^{2}}\right)\paa^{n_{00}}\pab^{n_{01}}\pba^{n_{10}}\pbb^{n_{11}},
\end{split}
\]
where $n_{00}=\suml{i}(1-\uia)(1-\uib)$, $n_{01}=\suml{i}(1-\uia)\uib$, $n_{10}=\suml{i}\uia(1-\uib)$, $n_{11}=\suml{i}\uia\uib$, and 
\[
\bd{d}_i=\begin{pmatrix}
            \bd{u}_i(1-t_i)\\
            \bd{u}_it_i\\
            1-t_i\\
            t_i
        \end{pmatrix}.
\]

We consider conjugate priors for all parameters except $(\btheta_s,\pbb)$ as follows: 
\[
\bbeta_t\sim N(\bmu_{\bbeta_t},\Sigma_{\bbeta_t}),\ 
        \lambda_t\sim N(\mu_{\lambda_t},\sigma_{\lambda_t}^2),\ \sigma_y^2\sim IG(\eta_y,\nu_y),
\]
and flat priors for $(\psa,\psb,\pbb)$.

Then, the updating steps for Gibbs sampling are as follows:
\begin{enumerate}
    \item We can update the unobserved values of the intermediate $S(1-t)$ from a Bernoulli with probability 
    \begin{equation*}
         \dfrac{h_{t}(s_{t},1)}{h_{t}(s_{t},1)+h_{t}(s_{t},0)},
    \end{equation*}
    where 
    \begin{equation*}
         h_t (s_t,s_{1-t}) = \exp(-\frac{(y-\lambda_{t}-\bbeta_{t}^T\bd{u})^2}{2\sigmay^{2}})\paa^{(1-s_0)(1-s_1)}\pab^{(1-s_0)s_1}\pba^{s_0(1-s_1)}\pbb^{s_0s_1}.
    \end{equation*}

    \item We can update $\bd{\theta}_y$ from the following conditional distribution
    {\small\begin{equation*}
    \bd{\theta}_y\mid \cdot\sim N\left(\bigg(\suml{i=1}^n\bd{d}_{i}\bd{d}_i^T/\sigma_y^2+\Sigma_{\btheta_y}^{-1}\bigg)^{-1}\bigg(\suml{i=1}^ny_i\bd{d}_i/\sigma_y^2+\Sigma_{\btheta_y}^{-1}\bmu_{\btheta_y}\bigg),\bigg(\suml{i=1}^n\bd{d}_{i}\bd{d}_i^T/\sigma_y^2+\Sigma_{\btheta_y}^{-1}\bigg)^{-1}\right),
    \end{equation*}}
    where 
    \begin{equation*}
        \bSigma_{\btheta_y}=\text{diag}(\bSigma_{\bbeta_0},\bSigma_{\bbeta_1},\sigma^2_{\lambda_0},\sigma^2_{\lambda_1}).
    \end{equation*}

    \item We can update $\sigmay$ from
    \begin{equation*}
        \sigma_y^2\mid \cdot\sim IG\left(\frac{n}{2}+\etay,\suml{i=1}^n(\btheta_y^T\bd{d}_i-y_i)^2)/2+\nuy\right).
    \end{equation*}
    \item 
    Since 
    \[
    \begin{split}
        p(\psa,\psb)&\propto \prod_i\Big((1+\pbb-\psa-\psb)^{(1-\uia)(1-\uib)}(\psb-\pbb)^{(1-\uia)\uib}\\
        &\qquad (\psa-\pbb)^{\uia(1-\uib)}\pbb^{\uia \uib}\Big),
    \end{split}
    \]

    then we first update $(\psa-\pbb,\psb-\pbb)$ from
    \[
    (\psa-\pbb,\psb-\pbb,1+\pbb-\psa-\psb)/(1-\pbb)\sim \text{Dirichlet($n_{10}+1,n_{01}+1,n_{00}+1$)},
    \]
    and next, obtain $(\psa,\psb)$.
    
    \item 
    Instead of updating $\pbb$ from posterior of full data including unobserved intermediates, we update $\pbb$ using marginal posterior with observed intermediate variables only. Then, we have
    \[
    p(\pbb\mid \cdot)\propto \prod_i \left(f_1^{t_i}(y_i,s_{i}|\pbb)f_0^{1-t_i}(y_{i},s_{i}|\pbb)\right).
    \]
    
    Since the conditional distribution of $\pbb$ is intractable, we use a grid-based posterior sampling to update $\pbb$: we evaluate the posterior over a fine grid of values and sample from the normalized grid-based posterior. 
\end{enumerate}

\section{Additional Plots}
\label{subsec:AdditionalPlots}
We ran additional MCMC chains under the setting described in Section \ref{subsec:Scenario2} with $n$ varying from 300 to 9600.
We plotted $\log(n)$ versus the logarithm of the estimated posterior variance, as shown in Fig. \ref{fig:ConvergenceRatePlot}. This plot shows that when the sample size is large (over 1000),  it is nearly linear, indicating that the posterior variance of $\rho$ is approximately $O(n^{-1})$, consistent with the rate stated in Theorem \ref{thm:ApproximateVariance}.

\begin{figure}[tp]
    \centering
    \includegraphics[width=0.5\linewidth]{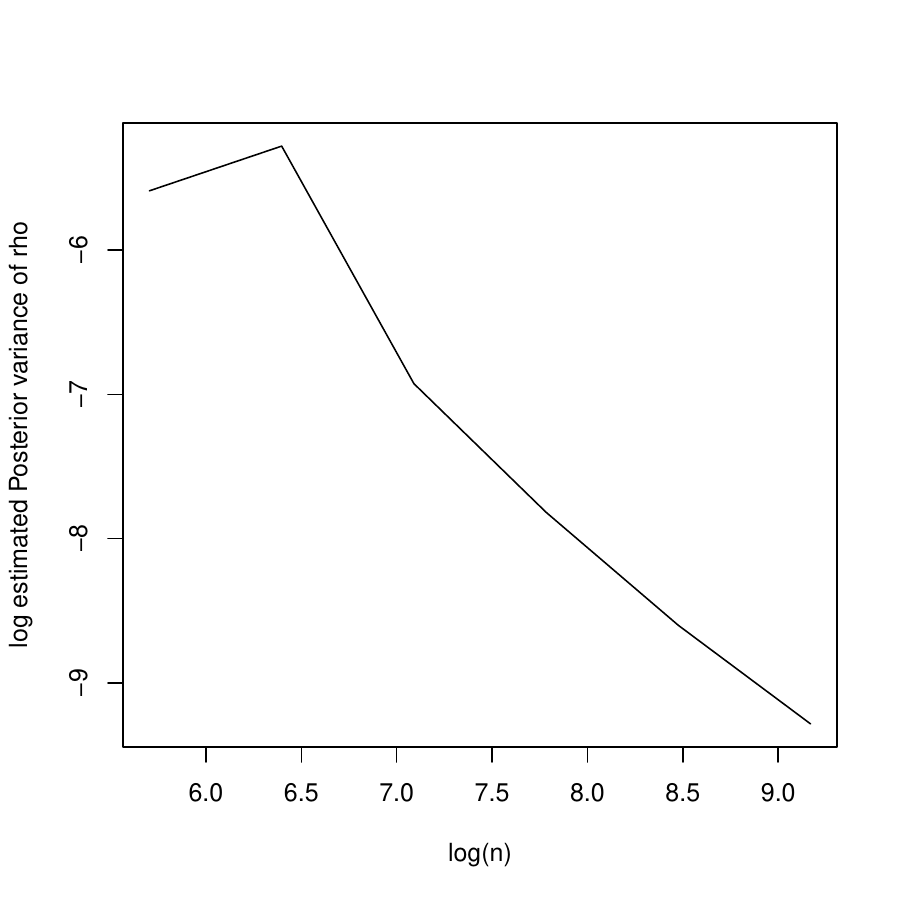}
    \caption{ $\log(n)$ vs. logarithm of the estimated posterior variance}
    \label{fig:ConvergenceRatePlot}
\end{figure}

\end{appendix}

\bibliographystyle{imsart-number} 
\bibliography{my}       


\end{document}